\def\BibTeX{{\rm B\kern-.05em{\sc i\kern-.025em b}\kern-.08em
    T\kern-.1667em\lower.7ex\hbox{E}\kern-.125emX}}
\newtheorem{theorem}{Theorem}
\newtheorem{definition}{Definition}
\newtheorem{example}{Example}
\newtheorem{lemma}{Lemma}
\def\cmt{\textcolor{black}}
\begin{document}

\title{REPOSE: Distributed Top-$k$ Trajectory Similarity Search with Local Reference Point Tries}

\author{
	\IEEEauthorblockN{Bolong Zheng$^{1}$, Lianggui Weng$^1$, Xi Zhao$^1$, Kai Zeng$^2$, Xiaofang Zhou$^3$, Christian S. Jensen$^4$
	}
	\IEEEauthorblockA{$^1$Huazhong University of Science and Technology, Wuhan, China\\
		Email: \{bolongzheng, liangguiweng, zhaoxi\}@hust.edu.cn}
	\IEEEauthorblockA{$^2$Alibaba Group, Hangzhou, China\\
		Email: zengkai.zk@alibaba-inc.com}
	\IEEEauthorblockA{$^3$University of Queensland, Brisbane, Australia\\
		Email: zxf@itee.uq.edu.au}
	\IEEEauthorblockA{$^4$Aalborg University, Aalborg, Denmark\\
		Email: csj@cs.aau.dk}
}

\maketitle

\begin{abstract}
Trajectory similarity computation is a fundamental component in a variety of real-world applications, such as ridesharing, road planning, and transportation optimization. Recent advances in mobile devices have enabled an unprecedented increase in the amount of available trajectory data such that efficient query processing can no longer be supported by a single machine. As a result, means of performing distributed in-memory trajectory similarity search are called for. However, existing distributed proposals either suffer from computing resource waste or are unable to support the range of similarity measures that are being used.
We propose a distributed in-memory management framework called \textsf{REPOSE} for processing top-$k$ trajectory similarity queries on Spark. \cmt{We develop a reference point trie (RP-Trie) index to organize trajectory data for local search}. In addition, we design a novel heterogeneous global partitioning strategy to eliminate load imbalance in distributed settings.
We report on extensive experiments with real-world data that offer insight into the performance of the solution, and show that the solution is capable of outperforming the state-of-the-art proposals.
\end{abstract}

\begin{IEEEkeywords}
trajectory similarity, top-$k$ query, distributed
\end{IEEEkeywords}

\section{Introduction} \label{sec:introduction}
With the widespread diffusion of GPS devices (such as smart phones), massive amounts of data describing the motion histories of moving objects, known as trajectories, are being generated and managed in order to serve a wide range of applications, e.g., travel time prediction \cite{DBLP:conf/gis/PfoserBBUTT08,DBLP:conf/ssdbm/PfoserTV06}, taxis dispatching, and path planning \cite{DBLP:conf/huc/ZhangLZCSL11,DBLP:journals/ijkesdp/YamamotoUW10}. For example, Didi has released an open dataset that includes 750 million GPS points in Xi'an with a sampling rate of 2--4 seconds over the span of one month.

\cmt{Top-$k$ trajectory similarity search that finds $k$ trajectories that are most similar to a query trajectory is a basic operation in offline analytics applications}. In the context of massive trajectory data, it is non-trivial to enable efficient top-$k$ trajectory similarity search. 
Many existing studies\cite{DBLP:conf/time/DingTS08,DBLP:conf/icde/FrentzosGT07,DBLP:conf/gis/BakalovHT05,DBLP:conf/mdm/BakalovHKT05,DBLP:conf/icde/VlachosGK02,DBLP:conf/compgeom/DriemelS17,DBLP:conf/vldb/Keogh02,DBLP:conf/sigmod/FaloutsosRM94,DBLP:conf/kdd/VlachosHGK03,DBLP:journals/vldb/SuLZZZ20} focus on optimizing the query processing on a single machine. However, if the data cardinality exceeds the storage or processing capacity of a single machine, these methods do not extend directly to a distributed environment.

Instead, a distributed algorithm is called for that is able to exploit the resources of multiple machines. DFT \cite{DBLP:journals/pvldb/XieLP17} and DITA \cite{DBLP:conf/sigmod/Shang0B18} are state-of-the-art distributed trajectory similarity search frameworks.
They include global partitioning methods that place trajectories with similar properties in the same partition, and they use a global index to prune irrelevant partitions. Then, they merge the results of local searches on the surviving partitions. Finally, they return a top-$k$ result. However, these methods have two shortcomings that limit their use in practice.

(1) Computing resource waste. DITA and DFT aim to guarantee load balancing by means of their global partitioning strategies. In particular, DITA places trajectories with close first and last points in the same partition. DFT places trajectory segments with close centroids in the same partition.
However, only surviving partitions are employed on the distributed nodes, while compute nodes with no surviving partitions remain idle and are not utilized. This has an adverse effect on computing resource utilization.

(2) Limited support for similarity measures.
We observe that DFT supports Hausdorff, Frechet \cite{DBLP:journals/ijcga/AltG95}, and DTW \cite{DBLP:conf/icde/YiJF98}, but does not support LCSS \cite{DBLP:conf/icde/VlachosGK02}, EDR \cite{DBLP:conf/sigmod/ChenOO05}, and ERP \cite{DBLP:conf/vldb/ChenN04}. Further, DITA supports Frechet, DTW, EDR, and LCSS, but does not support Hausdorff and ERP. In order to support diverse application scenarios, it is important to be able to accommodate a wide range of similarity measures in a single system. 

We propose an efficient distributed in-memory management framework, \textsf{REPOSE}, for top-$k$ trajectory similarity query processing that supports a wide range of similarity measures.
To eliminate poor computing resource utilization, \textsf{REPOSE} includes a novel heterogeneous global partition strategy that places similar trajectories in different partitions. Therefore, partitions have similar composition structure with the effect that most partitions and compute nodes are likely to contribute to a query result, which improves the computing resource utilization significantly and enables a load balancing.

In addition, we propose a reference point trie (RP-Trie) index to organize the trajectories in each partition. During index construction, we convert each trajectory into a reference trajectory by adopting Z-order \cite{DBLP:conf/isaac/DaiS03} that preserves sketch information. Then, we build an RP-Trie based on this representation, which reduces the space consumption. During query processing, we traverse the local RP-Trie in a best-first manner. We first develop a one-side lower bound on internal nodes for pruning. Moreover, we devise a two-side lower bound on leaf nodes for further improvement. Finally, we design a pivot-based pruning strategy for metrics.

In summary, we make the following contributions:
\begin{itemize}
	\item We propose a distributed in-memory framework, \textsf{REPOSE}, for the processing of top-$k$ trajectory similarity queries on Spark. The framework supports multiple trajectory similarity measures, including Hausdorff, Frechet, DTW, LCSS, EDR, and ERP distances.
	\item We discretize trajectories into reference trajectories and develop a reference point trie to organize the reference trajectories. Several optimization techniques are proposed to accelerate query processing. 
    \item We design a novel heterogeneous global partitioning strategy to achieve load balancing and to accelerate query processing by balancing the composition of partitions.
	\item We compare the performance of \textsf{REPOSE} with the state-of-the-art distributed trajectory similarity search frameworks using real datasets. The experimental results indicate that \textsf{REPOSE} is able to outperform the existing frameworks.
\end{itemize}

The rest of the paper is organized as follows. 
Section \ref{sec:problem} states the problem addressed. 
Section \ref{sec:RPTrie} discusses how to discretize trajectories into reference trajectories and build RP-Trie.
Section \ref{sec:QueryProcessing} introduces the query processing and several optimization techniques. 
Section \ref{sec:distributed} introduces the global partition strategy.
Section \ref{sec:similarity} describes how to extend our algorithm to other similarity measures.
Section \ref{sec:experiment} presents the results of our experimental study. 
Section \ref{sec:related} reviews related work. 
Finally, Section \ref{sec:conclusion} concludes the paper.
\section{Problem Definition} \label{sec:problem}
\begin{table}
	\caption{Summary of Notations}
	\label{tb:notation}
	\centering
	\begin{tabular}{cl}
		\toprule
		\textbf{Notation} & \textbf{Definition}\\
		\midrule
		$\mathcal{D}$ & Trajectory dataset \\
		$\tau$ & A trajectory \\
		$\tau^*$ & A reference trajectory \\
		$\textsf{D}_\textsf{H}(\tau_1,\tau_2)$ & Hausdorff distance between $\tau_1$ and $\tau_2$\\
		$ \mathcal{A} $ & A square region that encloses all trajectories\\
		$ g $ & The grid \\
		$\delta$ & The grid side length \\
		$LB_o$ & One-side lower bound \\
		$LB_t$ & Two-side lower bound \\
		$LB_p$ & Pivot based lower bound \\
		$ N_p $ & The number of the pivot trajectories  \\ 
		\bottomrule
	\end{tabular}
\end{table}

We proceed to present the problem definition. Frequently used notation is summarized in Table \ref{tb:notation}.

\begin{definition}[Trajectory]
A trajectory $\tau$ is a finite, time-ordered sequence $\tau=\langle p_1,p_2,\dots,p_n \rangle$, where each $p_i \in \tau$ is a sample point with a longitude and a latitude.
\end{definition}

A range of distance functions have been used for quantifying the similarity between trajectories, including the Hausdorff, Frechet, DTW, LCSS, EDR and ERP distances. To ease the presentation, we initially focus on the Hausdorff distance, and extend the coverage to include other distances later.
\begin{definition}[Trajectory Distance]\label{def:hausdorff}
Given two trajectories $\tau_1=\langle q_1,q_2,\dots,q_m \rangle$ and $\tau_2=\langle p_1,p_2,\dots,p_n\rangle$, the Hausdorff distance between $\tau_1$ and $\tau_2$ is computed as follows.
\begin{equation}\label{dist_form1} 
\textsf{D}_\textsf{H}(\tau_1,\tau_2)=\max\{\max_{q_i \in \tau_1}\min_{p_j \in \tau_2}d(q_i,p_j), \max_{p_j \in \tau_2}\min_{q_i \in \tau_1}d(q_i,p_j)\},
\end{equation}
where $d(p_i,q_j)$ is the Euclidean distance.
\end{definition}

\begin{definition}[Trajectory Similarity Search]
Given a set of trajectories $\mathcal{D}=\{\tau_1,\tau_2,\dots,\tau_N\}$, a query trajectory $\tau_q$, a distance function \textsf{Dist}$(\cdot)$, and an integer $k$, the top-$k$ trajectory similarity search problem reports a set $R$ of $k$ trajectories, where $\forall \tau \in R$ and $\forall \tau' \in \mathcal{D}-R$, we have $\textsf{Dist}(\tau_q,\tau) < \textsf{Dist}(\tau_q,\tau')$.
\end{definition}

\begin{example}
\cmt{
Given a query trajectory $ \tau_q$ and the dataset $\mathcal{D}=\{\tau_1,\tau_2,\tau_3,\tau_4,\tau_5\}$ in Table \ref{tb:trajectory_table}, we process a top-$2$ query. Computing the Hausdorff distance between $ \tau_q$ and all trajectories in $\mathcal{D}$, we get $ \textsf{D}_\textsf{H}(\tau_q,\tau_1)=2.83$, $ \textsf{D}_\textsf{H}(\tau_q,\tau_2)=6.08$, $ \textsf{D}_\textsf{H}(\tau_q,\tau_3)=6.71$, $ \textsf{D}_\textsf{H}(\tau_q,\tau_4)=3.16$, $
\textsf{D}_\textsf{H}(\tau_q,\tau_5)=6.08  $. Therefore, the top-$ 2 $ result is $ \{\tau_1,\tau_4\} $.}
\end{example}

\begin{table}[t]
	\caption{Point Coordinates of Trajectories}\label{tb:trajectory_table}
	\centering \small
	\renewcommand\arraystretch{1.1}
	\begin{tabular}{|c|l|}
		\hline 
		{\bf \centering Trajectory}  & \multicolumn{1}{c|}{\bf  \centering Point Coordinates} \\\hline
		$ {\tau}_1 $ & $ (0.5,7.5), (2.5,7.5), (6.5,7.5), (6.5,4.5) $ \\\hline
		$ {\tau}_2 $ & $ (1.5,0.5), (2.5,0.5), (2.5,4.5), (4.5,4.5) $ \\\hline
		$ {\tau}_3 $ & $  (4.5,0.5), (7.5,0.5), (7.5,2.5), (4.5,2.5), (4.5,1.5)  $  \\\hline
		$ {\tau}_4 $ & $  (0.5,7.5), (2.5,7.5), (5.5,7.5), (5.5,3.5) $  \\\hline
		$\tau_5$ & $ (1.5,0.5), (2.5,0.5), (2.5,5.5), (0.5,5.5), (0.5,2.5)$  \\\hline
		\cmt{$\tau_q$} & \cmt{$ (0.5,6.5), (2.5,6.5), (4.5,6.5) $}  \\\hline
	\end{tabular}
\end{table}

\begin{figure}[t]
	\centering
	\includegraphics[width=.3\textwidth]{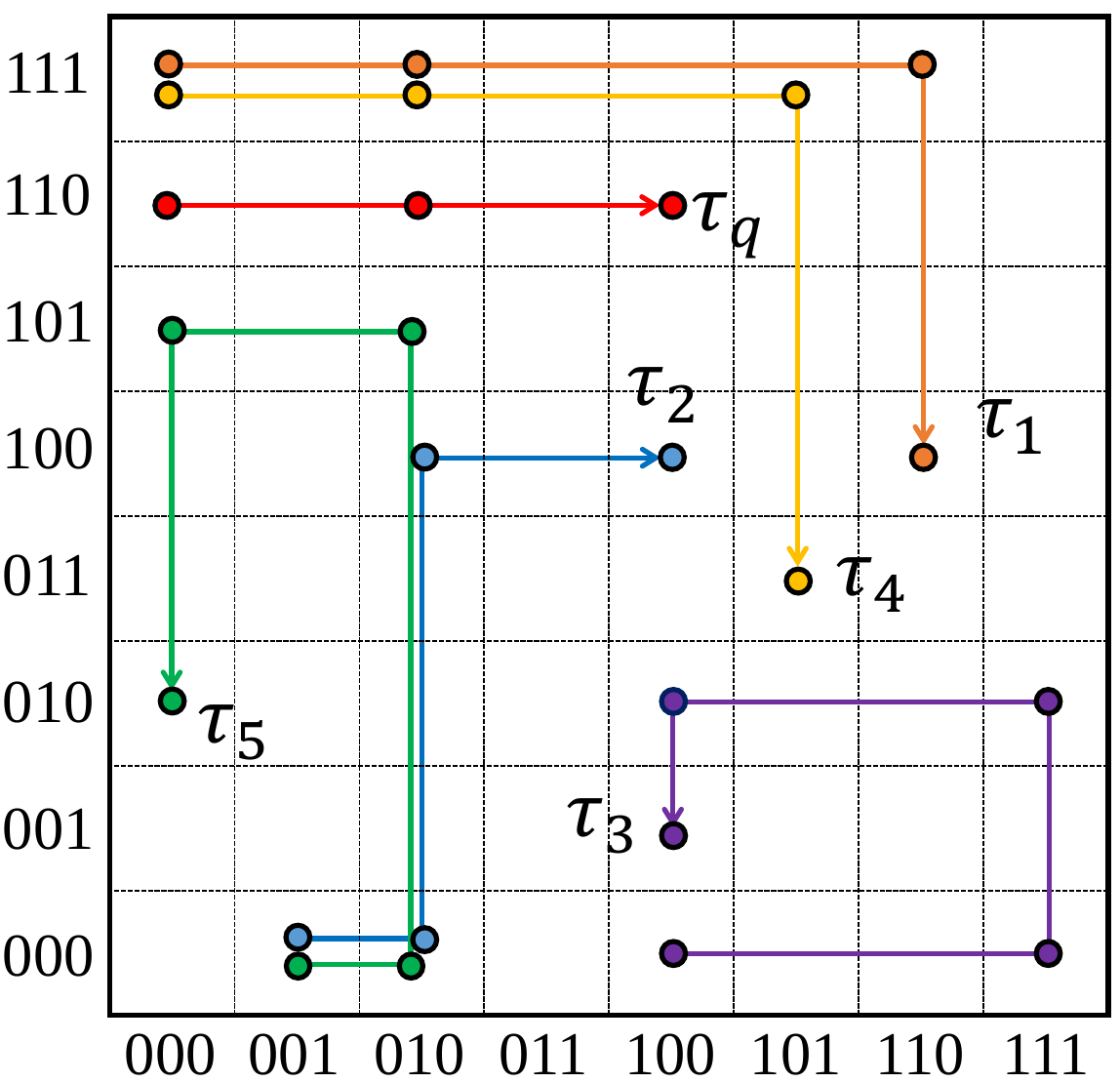} 
	\caption{A Running Example} 
	\label{fig:grid} 
\end{figure}

\section{Reference Point Trie}\label{sec:RPTrie}
Before describing the distributed framework, we proceed to introduce a reference point trie (RP-Trie) index for local search. First, we explain how to convert trajectories into reference trajectories. Second, we show how to build an RP-Trie on the reference trajectories. Finally, we propose an order-independent optimization to improve the performance.

\subsection{Discretizing Trajectories with Z-order}
Inspired by signature-based trajectory representation \cite{DBLP:journals/tkde/TaLXLHF17}, we adopt the Z-order \cite{DBLP:conf/isaac/DaiS03} to map trajectories from their native two-dimensional space to a one-dimensional space. Let $\mathcal{A}$ be a square region with side length $U$ that encloses all trajectories. We partition $\mathcal{A}$ by means of a regular $l \times l$ grid with side length $\delta$, where $l=U/\delta$ is a power of 2. Each cell $g$ has a unique z-value and a reference point (the center point of $g$).

\begin{example}
\cmt{In Fig. \ref{fig:grid}, we show an 8 $ \times $ 8 grid}. The z-value of a cell is the alternate combination of its horizontal and vertical coordinates. For instance, the cell whose (binary) horizontal and vertical coordinates are \underline{010} and 101, respectively, has z-value  \underline{0}{1}\underline{1}{0}\underline{0}{1}. 
\end{example}

\begin{definition}[Reference Trajectory]\label{def:reference_trajectory}
\cmt{
We convert a trajectory $\tau=\langle  p_1,p_2,\dots,p_n\rangle$ into a reference trajectory $\tau^*=\langle p^*_1,p^*_2,\dots,p^*_n\rangle$, where $p^*_i$ is the reference point of the cell $g_i$ that $p_i$ belongs to.   
Note that $\tau^*$ corresponds to a sequence of of z-values $Z=\langle z_1,z_2,\dots,z_n\rangle$, where $z_i$ is the z-value of $g_i$.}
\end{definition}
Note that the grid granularity affects the fidelity of a reference trajectory. A small $ \delta $ ensures a high fidelity.

\subsection{Building an RP-Trie Index}
We proceed to cover how to build an RP-Trie on a set of reference trajectories, which is similar to building a classical trie index. We thus use the z-value as the value of the trie node and insert all reference trajectories into the trie. The difference is that if one reference trajectory is a prefix of another reference trajectory, we append the character $\$$ at the end, which guarantees that every reference trajectory ends at a leaf node. 

\begin{figure}[t]
	\centering
	\includegraphics[width=.4\textwidth]{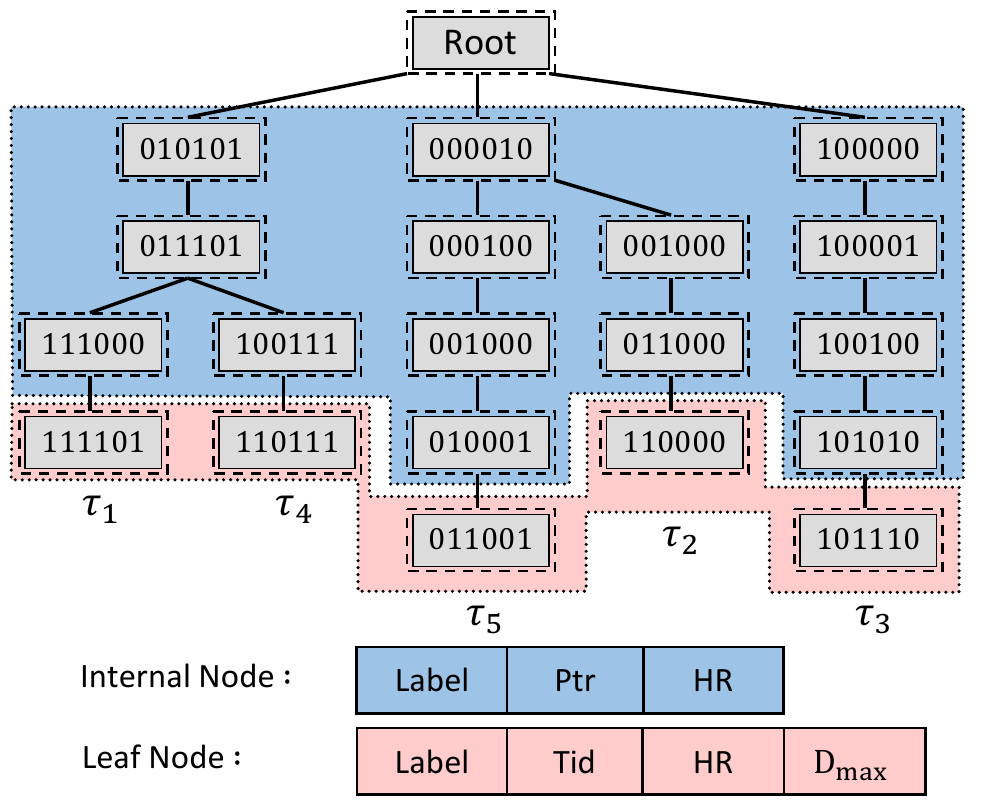} 
	\caption{The Structure of RP-Trie} 
	\label{fig:trie} 
\end{figure}

The structure of an RP-Trie is shown in Fig. \ref{fig:trie}.
In internal nodes, the $ \textsf{Label}$ attribute is the z-value of the node that corresponds to the coordinates of the reference point, and $ \textsf{Ptr} $ is a pointer array pointing to child nodes. In leaf nodes, the $ \textsf{Label} $ attribute has the same meaning as internal nodes.
Since each reference trajectory ends at a leaf node and represents multiple trajectories, we use $ \textsf{Tid} $ to record the trajectory ids. 

\textbf{Pivot trajectory.}
For similarity metrics such as Hausdorff, Frechet, and ERP, the triangle inequality can be used for pruning.
To enable this, we use pivot trajectories to estimate the lower bound distance between a query trajectory and all trajectories in a subtree rooted at a node. Specifically, we select $ N_p $ trajectories as global pivot trajectories. For each node, let $\mathcal{T}_{sub}$ be the set of reference trajectories covered by it. We then keep an $N_p$-dimensional array $ \textsf{HR} $, where $ \textsf{HR}[i] $ stores a tuple $(min, max)$ that represents the minimum and maximum distances between all reference trajectories in $\mathcal{T}_{sub}$ and the $ i $-th pivot trajectory.

In particular, the selection of pivot trajectories has a substantial effect on the pruning performance. Thus, existing proposals consider how to find optimal pivots \cite{DBLP:journals/cacm/Shapiro77,DBLP:journals/prl/BustosNC03}. In general, we aim to obtain a pivot set such that any two pivot trajectories in it are as distant as possible. With this in mind, we adopt a practical but effective method \cite{DBLP:conf/adbis/SkopalPS04}. We uniformly and randomly sample $m$ groups of $N_p$ trajectories. In each group, we compute the distances of any two trajectories, and let the sum of all distances be the score of the group. Finally, we choose the $N_p$ trajectories in the group with the largest score as the set of pivot trajectories.

In addition, we store a value $ \textsf{D}_{max} $ for each leaf node that is the maximum distance between the node's reference trajectory $\tau^*$ and the trajectories in the leaf node. 

\cmt{
\textbf{Cost analysis.}
Assume that we have $N$ reference trajectories with maximum length $L$. As the RP-Trie has at most $N \cdot L$ nodes, the space cost is $O(N \cdot L \cdot N_p)$. To construct an RP-Trie, the cost of the computation of the distances between pivots and all trajectories dominates all other costs and takes $O(N \cdot L^2 \cdot N_p)$. Since both $L$ and $N_p$ are small ($N_p=5$ is used in experiments), both the space and time costs are affordable.}

\textbf{Succinct trie structure.}
We observe that the upper levels of an RP-Trie consist of few nodes that are accessed frequently, while the lower levels comprise the majority of nodes that are accessed relatively infrequently due to pruning. Inspired by SuRF \cite{DBLP:journals/sigmod/ZhangLLAKP19}, we introduce a fast search structure for the upper levels of an RP-Trie and a space-efficient structure for the lower levels. Specifically, we optimize the RP-Trie by switching between bitmaps and byte arrays at different layers. 
For each upper level node, we use two bitmaps $ B_c$ and $B_l $ with the same size as the number of grid cells, to separately record the value of a child node and the state of the child node. If a cell is the node's child node, we set the corresponding bit in $ B_c $ to 1. If a child node is not a leaf node, we set the corresponding bit in $ B_l $ to 1. Then $ B_c$ and $B_l $ of all nodes are concatenated in breadth-first order separately such that we can quickly access any upper level node.
For each lower level node, we serialize the structure with byte sequences. Given that the trie becomes sparse in these levels, it is space-efficient to use byte sequences.

\subsection{Optimization by Z-value Re-arrangement}
\cmt{Given $\tau_q$ and a reference trajectory $\tau^*$, if we interchange the positions of any two points in $\tau^*$ to generate a new reference trajectory $\tau^{*'}$, we have $\textsf{D}_\textsf{H}(\tau_q,\tau^*)=\textsf{D}_\textsf{H}(\tau_q,\tau^{*'})$.
Therefore, we know that, unlike other distance measures, Hausdorff is order independent. Hence, we propose an optimization that reduces the size of an RP-Trie by rearranging the reference trajectories that achieve longer common prefixes. Specifically, we simplify $\tau^*$ in two steps: (1) z-value deduplication: We keep only one point when two or more points in $\tau^*$ have identical z-value. (2) z-value re-ordering: We re-order the points in $\tau^*$.
For example, assume that an RP-Trie is constructed from trajectories $ \tau_2 $ and $ \tau_5 $ as shown to the left in Fig. \ref{fig:trieOp}. If the z-values 000100 and 001000 of $ \tau_5 $ are swapped, the total number of nodes decreases, as shown to the right in Fig. \ref{fig:trieOp}.}

\begin{figure}[t]
	\centering
	\includegraphics[width=.4\textwidth]{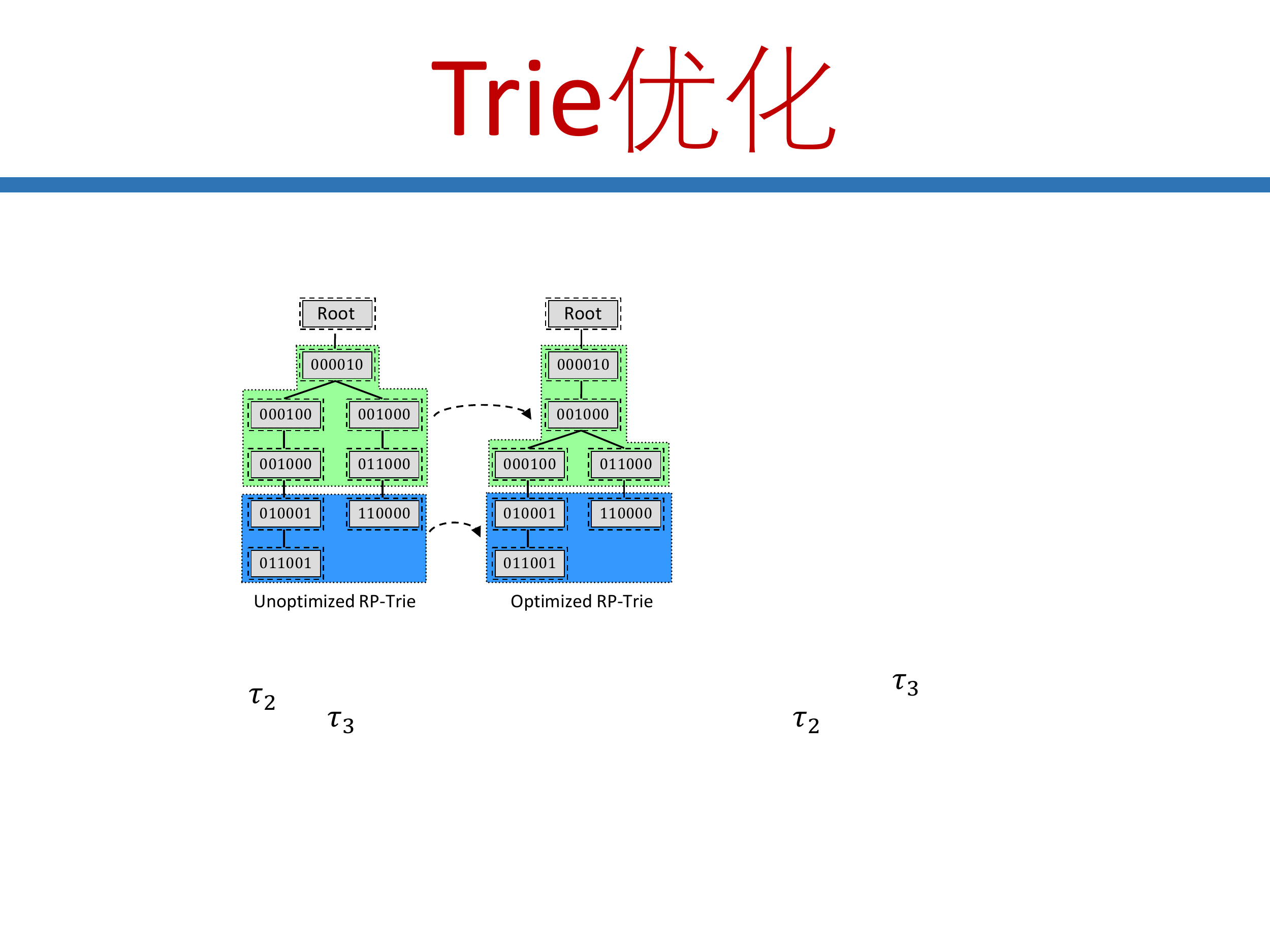} 
	\caption{Optimized Reference Point Trie} 
	\label{fig:trieOp} 
\end{figure}

Thus, we aim to build an optimized RP-Trie with fewer nodes by means of z-value re-arrangement.
Let $Z$ be the set of z-values of a reference trajectory. Let $\mathcal{Z}=\{Z_1,Z_2,\dots,Z_N\}$ be the collection of z-values sets of all trajectories. 
In order to find the optimized RP-Trie structure, 
we want each level of the trie to have the minimum number of nodes. In other words, we need to find the smallest cell set $ G_{opt}=\{g_1,g_2,\dots,g_r\} $
such that for any $Z_i$ in $\mathcal{Z}$, $Z_i$ contains at least one cell in $G_{opt}$. 
This way, we regard $ g_1,g_2,\dots,g_r $ as child nodes of the root and partition $ \mathcal {Z} $ into classes $\mathcal{C}_1,\mathcal{C}_2,\dots,\mathcal{C}_r$. 
Using this method to recursively partition each class $ \mathcal{C}_i $, we obtain an optimized RP-Trie.

However, it is difficult to find the set $ G_{opt}$ for $\mathcal{Z}$.
Before explaining how to solve this problem, we introduce a well known NP-hard problem: the hitting set problem.
\begin{definition}[$HS({\mathcal{Z}},b)$]
	Given a collection of sets $\mathcal{Z}=\{Z_1,$ $Z_2,…,Z_N\}$ and a budget $b$, the hitting set $HS({\mathcal{Z}},b)$ returns a set $ G $ that satisfies for any $Z_i$ in $\mathcal{Z}$,
	$$
	|G|\leq b \land G\cap Z_i\neq \emptyset
	$$
	If no such set $G$ exists, $HS({\mathcal{Z}},b)$ returns an empty set.
\end{definition}
Set $G$ is called a hitting set of $\mathcal{Z}$. Usually, a collection of sets $\mathcal{Z}$ has more than one hitting set. It is easy to find a hitting set for it when disregarding the budget $b$. However, it is costly to obtain a hitting set within a budget $b$.

\begin{theorem}
Finding an optimized RP-Trie is NP-hard.
\end{theorem}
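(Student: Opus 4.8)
The plan is to prove NP-hardness by a polynomial-time reduction from the hitting set problem, which was just shown to be NP-hard, to the problem of constructing an optimized RP-Trie. The key observation is that the root-level subproblem in building the optimized trie is exactly minimum hitting set: as stated above, the children of the root are obtained by finding the smallest cell set $G_{opt}$ that intersects every z-value set $Z_i \in \mathcal{Z}$. Thus I would first recast ``finding an optimized RP-Trie'' in its decision form --- does there exist an RP-Trie whose root has at most $b$ children? --- and show this is equivalent to deciding whether $HS(\mathcal{Z}, b)$ returns a nonempty set.

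For the reduction, I would take an arbitrary hitting set instance with ground set $\{e_1,\dots,e_n\}$, collection $\mathcal{Z}=\{Z_1,\dots,Z_N\}$, and budget $b$, and construct a trajectory dataset as follows. Choose a grid whose side length $l$ is the smallest power of $2$ with $l^2 \geq n$, and fix an injection from ground elements to cells, sending $e_j$ to a cell with a distinct z-value $z_j$. For each set $Z_i$, create one trajectory $\tau_i$ that passes through exactly the cells assigned to the elements of $Z_i$. After z-value deduplication, the reference trajectory $\tau_i^*$ has z-value set $\{z_j : e_j \in Z_i\}$, so the collection of z-value sets of the dataset is a faithful copy of $\mathcal{Z}$. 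This construction is clearly polynomial, since the grid has $O(n)$ cells and we build $N$ trajectories.

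Next I would establish correctness. Because Hausdorff is order independent, each reference trajectory may be rearranged freely, so in any RP-Trie the child of the root through which $\tau_i^*$ is routed can be any cell in $Z_i$. Hence the set of root children is a cell set that meets every $Z_i$, i.e.\ a hitting set of $\mathcal{Z}$, and conversely every hitting set yields a valid assignment of first cells. Minimizing the number of root children therefore coincides exactly with finding a minimum hitting set, so the root of the optimized RP-Trie has at most $b$ children if and only if $HS(\mathcal{Z}, b) \neq \emptyset$. A polynomial-time algorithm for the optimized RP-Trie would thus decide the hitting set problem, contradicting its NP-hardness.

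The main obstacle is making the encoding rigorous rather than the reduction itself: I must verify that an arbitrary finite set system can be realized as z-value sets of genuine trajectories on a valid power-of-two grid, and that the set semantics induced by deduplication and order independence match the set-intersection semantics of hitting set. Elements appearing in no $Z_i$ can be dropped, and cells can be made arbitrarily fine, so no structural constraint of the Z-order embedding obstructs the construction. It also suffices to prove hardness at the root level alone; the recursive partitioning of the resulting classes only adds further such subproblems and cannot make the first-level decision easier, so the NP-hardness of the full optimization follows.
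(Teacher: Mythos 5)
Your proof is correct, and it hinges on the same key identification as the paper's: the root-level subproblem of the optimized RP-Trie is exactly minimum hitting set over the collection $\mathcal{Z}$ of z-value sets. But you carry the reduction in the opposite --- and formally more standard --- direction. The paper observes that $G_{opt}$ is a minimum hitting set and then states that finding $G_{opt}$ ``is reduced to'' solving $HS(\mathcal{Z},r)$ and $HS(\mathcal{Z},r-1)$; read literally, this reduces the trie problem \emph{to} hitting set, which by itself shows hitting set is at least as hard as the trie problem, not the converse needed for NP-hardness (the implicit content, of course, is that the trie problem contains minimum hitting set on an arbitrary $\mathcal{Z}$ as its core). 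You make this explicit and airtight: you cast the trie problem as a decision problem on the number of root children, encode an arbitrary hitting set instance as a genuine trajectory dataset on a power-of-two grid (valid because a trajectory may visit any prescribed set of cells, and z-value deduplication plus the order-independence of Hausdorff make the z-value sequence behave as a set), and conclude that a polynomial-time algorithm for the optimized RP-Trie would decide $HS(\mathcal{Z},b)$. What your route buys is rigor on exactly the two points the paper glosses over --- the direction of the reduction and the realizability of an arbitrary set system as reference trajectories --- while your final remark, that hardness at the root level suffices, is sound under the paper's definition of the optimized trie, which minimizes each level top-down, so the root children of any optimized RP-Trie form a minimum hitting set by definition.
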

\begin{proof}
To find an optimized RP-Trie for our reference trajectory collection $\mathcal{Z}$, we find the smallest cell set $ G_{opt}$ with size $|G_{opt}|=r$ such that for any $Z_i$ in $\mathcal{Z}$, $Z_i$ contains at least a cell in $G_{opt}$. Obviously, $G_{opt}$ is the hitting set of $\mathcal{Z}$ with the minimum size. So, the result of $HS({\mathcal{Z}},r)$ is $G_{opt}$, and the result of $HS({\mathcal{Z}},r-1)$ is an empty set. Therefore, finding $G_{opt}$ is reduced to solving $HS({\mathcal{Z}},r)$ and $HS({\mathcal{Z}},r-1)$. This completes the proof.
\end{proof}

We employ a greedy algorithm to solve this problem. We make the most frequent z-value $ z_1 $ in $ \mathcal {Z} $ the first child node of the root. Then we put all $ Z_i $ containing $ z_1 $ into the subtree of $ z_1 $ and remove them from $ \mathcal {Z} $. Next, we make the current most frequent z-value $ z_2 $ in $ \mathcal {Z} $ the second child node and place all $ Z_i $ containing $ z_2 $ in its subtree and remove them from $ \mathcal {Z} $.  Repeating the above process until $ \mathcal {Z} $ is empty, we obtain the division structure of the current level of the trie. We recursively repeat the process for the remaining levels to build an optimized RP-Trie.
\cmt{Assume that we have $N$ reference trajectories and $M$ cells. The computation cost of the greedy algorithm is $O(N \cdot M^2)$, where $M$ is usually a small constant. The details can be found in the accompanying technical report \cite{repose-report}.}

\section{Query Processing}\label{sec:QueryProcessing}
We proceed to present the top-$k$ query processing as well as several query optimization techniques. 
\cmt{For ease of presentation, we mainly discuss the details of the Hausdorff distance. Extensions to support other distance measures, such as Frechet and DTW, can be found in Section \ref{sec:similarity}.}
First, we introduce our basic algorithm to perform a top-$ k $ query, where the nodes in the RP-Trie are traversed in a best-first manner. We develop a pruning condition called one-side lower bound to filter out unqualified internal nodes. We provide three optimization techniques. The first technique reduces the computing cost on a node by using intermediate computation results. The second technique provides a tight pruning condition called two-side lower bound for pruning on leaf nodes. The third technique utilizes the metric property for pivot based pruning.

\subsection{Search Procedure}
To answer a top-$k$ query, the nodes in the RP-Trie are traversed in ascending order of a lower bound, which is called one-side lower bound $ LB_o $ used to prune unqualified internal nodes. Let $d_k$ be the $ k $-th smallest found result. An internal node is pruned when its $LB_o$ is greater than $d_k$.

We briefly introduce the procedure as follows: 
\begin{enumerate}
	\item  We construct a priority queue $ E $ to keep the nodes based on the ascending order of their lower bounds. We initially insert the root node into $E$. The $minHeap$ stores the trajectories with distance to $\tau_q$ no larger than $d_k$.
	\item\label{item:search_procedure_2} 
	The head element $t$ is popped from $E$. If the lower bound of $t$ is smaller than $d_k$, we go to Step \ref{item:search_procedure_3}). Otherwise, we stop the procedure and return $minHeap$.
	\item\label{item:search_procedure_3} If node $t$ is a leaf node, we compute the distances between $\tau_q$ and the trajectories recorded by $ \textsf{Tid} $. Then, we update $d_k$ and $minHeap$; Otherwise, we compute the $LB_o $ for each child node of $ t $ and insert it into $E$.
	\item Repeating Steps \ref{item:search_procedure_2}) and \ref{item:search_procedure_3}) when $ E $ is not empty. Otherwise, we stop the procedure and return $minHeap$.
\end{enumerate}

We introduce how to compute the one-side lower bound $LB_o$. For simplicity, for a sub reference trajectory that starts from the root node and ends at an internal node, we also call it a reference trajectory. Next, we give a formal definition of the one-side lower bound.

\begin{definition}[One-Side Lower Bound]
Given the query trajectory $ \tau_q = \langle q_1, q_2,\dots,q_m \rangle $ and a node whose reference trajectory is $\tau^*=\langle p^*_1,p^*_2,\dots,p^*_n \rangle $, one-side lower bound of the node is defined as follows,
\begin{equation}
LB_o(\tau_q, \tau^*)=\max\{ \max _{p^*_j \in \tau^*} \min _{q_i \in \tau_q} d(q_i,p^*_j)-\frac{\sqrt{2}\delta}{2},0\} 
\end{equation}	
\end{definition}
Two following lemmas ensure that the returned results by above procedure are the correct top-$ k $ query results of $\tau_q$.
\begin{lemma}
The $ LB_o $ of a leaf node is smaller than the minimum distance between $\tau_q$ and any trajectory stored in the node.
\label{tLB_o1}
\end{lemma}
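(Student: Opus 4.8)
The plan is to prove the pointwise inequality $LB_o(\tau_q,\tau^*) \le \textsf{D}_\textsf{H}(\tau_q,\tau)$ for every trajectory $\tau$ stored in the leaf node; taking the minimum over all such $\tau$ then yields the stated claim. The single geometric fact driving the argument is that each sample point lies in the cell whose center is its reference point, and a square cell of side $\delta$ has half-diagonal $\frac{\sqrt{2}\delta}{2}$. Hence for every point $p_j \in \tau$ with reference point $p^*_j$ we have $d(p_j,p^*_j) \le \frac{\sqrt{2}\delta}{2}$. This is precisely the correction term subtracted inside $LB_o$: its role is to absorb the worst-case displacement incurred when a trajectory is discretized to its reference trajectory.

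First I would observe that the quantity inside $LB_o$, namely $\max_{p^*_j \in \tau^*}\min_{q_i \in \tau_q} d(q_i,p^*_j)$, is exactly one of the two directed terms of the Hausdorff distance in Definition \ref{def:hausdorff}, but evaluated on $\tau^*$ rather than on $\tau$. Since $\textsf{D}_\textsf{H}(\tau_q,\tau)$ is the maximum of two directed terms, it dominates the single directed term $\max_{p_j \in \tau}\min_{q_i \in \tau_q} d(q_i,p_j)$. It therefore suffices to compare the reference-trajectory directed term with this same directed term on $\tau$ and to show that the gap between them is at most $\frac{\sqrt{2}\delta}{2}$.

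The core step is a pointwise triangle-inequality argument, valid because $d$ is a metric. For a fixed $p_j$, letting $q_{i^\ast}$ be the query point nearest to $p_j$, I would write $\min_{q_i} d(q_i,p^*_j) \le d(q_{i^\ast},p^*_j) \le d(q_{i^\ast},p_j) + d(p_j,p^*_j) \le \min_{q_i} d(q_i,p_j) + \frac{\sqrt{2}\delta}{2}$. Taking the maximum over $j$ on both sides lifts the per-point bound to the directed distances, so that the $\tau^*$ directed term exceeds the $\tau$ directed term by at most $\frac{\sqrt{2}\delta}{2}$. Subtracting $\frac{\sqrt{2}\delta}{2}$ and then using the nonnegativity of the true distance to discharge the outer $\max\{\cdot,0\}$ closes the argument.

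The one place requiring care is the correspondence between the points of $\tau$ and those of $\tau^*$ after the z-value deduplication and re-ordering optimization, since $\tau^*$ may contain fewer, reordered reference points than $\tau$ has sample points. I would handle this by noting that the map $p_j \mapsto p^*_j$ is surjective onto the point set of $\tau^*$, so the maximum of $\min_{q_i} d(q_i,\cdot)$ taken over $\tau^*$ equals the maximum taken over $\{p^*_j : p_j \in \tau\}$; because the Hausdorff distance is order independent for this discretization (as already established in the discussion of z-value re-arrangement), neither reordering nor removing duplicate reference points alters these maxima, and the pointwise bound lifts cleanly. Reconciling the two point sets, rather than the triangle inequality itself, is the main obstacle.
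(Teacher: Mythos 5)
Your proposal is correct and follows essentially the same route as the paper's proof: both bound $\textsf{D}_\textsf{H}(\tau_q,\tau)$ from below by the directed term $\max_{p_j\in\tau}\min_{q_i\in\tau_q}d(q_i,p_j)$ and then apply the per-point triangle inequality $d(p_j,p^*_j)\le\frac{\sqrt{2}\delta}{2}$ to pass to the reference trajectory. The only difference is cosmetic: you make explicit the surjectivity of the map $p_j\mapsto p^*_j$ onto the points of $\tau^*$ (which the paper handles implicitly in the equality $\max_{p_i\in\tau,\,p_i\in p^*_k}=\max_{p^*_k\in\tau^*}$) and you discharge the $\max\{\cdot,0\}$ clamp at the end via nonnegativity rather than carrying it through the derivation.
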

\begin{proof}
Given $ \tau_q = \langle q_1, q_2,\dots,q_m\rangle $, a node's reference trajectory $\tau^*= \langle p^*_1,p^*_2,\dots,p^*_n\rangle$ and any trajectory $\tau=\langle p_1,p_2,\dots,p_{n'} \rangle$ stored in the node, we prove that $ \textsf{D}_\textsf{H}(\tau_q,\tau) \geq LB_o(\tau_q,\tau^*) $. 
Assume $p_i$ falls into the cell represented by $ p^*_k $ (we call it $p_i\in p^*_k$), we have $d(q_j,p_i)\geq \max\{d(q_j,p^*_k)-\sqrt{2}\delta/2,0\}$ according to the triangle inequality.
Then we have
\begin{equation*}                                                         
\begin{split}
\textsf{D}_\textsf{H}(\tau_q,\tau)
=&\max \{ \max _{q_j \in \tau_q} \min _{p_i \in \tau} d(p_i,q_j),\max _{p_i \in \tau} \min _{q_j \in \tau_q} d(p_i,q_j) \}\\
\geq& \max _{p_i \in \tau} \min _{q_j \in \tau_q} d(p_i,q_j)\\
\geq& \max _{p_i \in \tau,p_i\in p^*_k} \min _{q_j \in \tau_q} \max\{d(q_j,p^*_k)-\frac{\sqrt{2}\delta}{2},0\}\\
=&\max _{p^*_k \in \tau^*} \min _{q_j \in \tau_q} \max\{d(q_j,p^*_k)-\frac{\sqrt{2}\delta}{2},0\}\\
=&\max\{ \max _{p^*_k \in \tau^*} \min _{q_j \in \tau_q} d(q_j,p^*_k)-\frac{\sqrt{2}\delta}{2},0\} \\
=&LB_o(\tau_q,\tau^*)
\end{split}              
\end{equation*}	

\end{proof}

Lemma \ref{tLB_o1} indicates that if the node is a leaf node, $ LB_o $ is the lower bound distance between $ \tau_q $ and the trajectories recorded by $ \textsf{Tid} $. Thus, the trajectories recorded by the node can be safely pruned when $ LB_o \geq d_k $. However, if the node is not a leaf node, Lemma \ref{tLB_o1} does not work, so it comes to the following Lemma \ref{tLB_o2}. Lemma \ref{tLB_o2} shows that $LB_o $ of a node is larger than that of its parent node. If $ LB_o $ is larger than $ d_k $, no top-$ k $ results exist in the subtree of current node, and we can safely prune it. 

\begin{lemma}
For an internal node, the $LB_o$ between its child node and $\tau_q$ is greater than that between the node and $\tau_q$.
\label{tLB_o2}
\end{lemma}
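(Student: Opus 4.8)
The plan is to exploit the defining structure of a trie: moving from an internal node to one of its children corresponds to appending exactly one symbol to the prefix. Concretely, if the internal node has reference trajectory $\tau^* = \langle p^*_1, p^*_2, \dots, p^*_n \rangle$, then any child node has reference trajectory $\tau^{*'} = \langle p^*_1, p^*_2, \dots, p^*_n, p^*_{n+1} \rangle$, where $p^*_{n+1}$ is the reference point labeling the edge to that child. The first step of the proof is therefore simply to record this fact, so that the point set of the child's reference trajectory is a \emph{superset} of that of the parent: $\{p^*_1,\dots,p^*_n\} \subseteq \{p^*_1,\dots,p^*_n,p^*_{n+1}\}$.

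The second step is to read off from the definition of $LB_o$ that the quantity being floored depends on $\tau^*$ only through an outer maximum ranging over the points of the reference trajectory. Writing $f(p^*_j) = \min_{q_i \in \tau_q} d(q_i,p^*_j)$, the definition becomes $LB_o(\tau_q,\tau^*) = \max\{\max_{p^*_j \in \tau^*} f(p^*_j) - \tfrac{\sqrt{2}\delta}{2}, 0\}$. Since the maximum of a fixed function over a larger index set can only grow, the superset relation from the first step yields
\begin{equation*}
\max_{p^*_j \in \tau^{*'}} f(p^*_j) \;\geq\; \max_{p^*_j \in \tau^{*}} f(p^*_j).
\end{equation*}

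The final step is a monotonicity observation: the map $x \mapsto \max\{x - \tfrac{\sqrt{2}\delta}{2}, 0\}$ is non-decreasing, since both subtracting the constant $\tfrac{\sqrt{2}\delta}{2}$ and taking the maximum with $0$ preserve order. Applying this map to both sides of the displayed inequality gives $LB_o(\tau_q,\tau^{*'}) \geq LB_o(\tau_q,\tau^*)$, which is exactly the claim. I do not expect a genuine obstacle here; the only point requiring care is to state the trie-extension fact cleanly so that the set inclusion is unambiguous, after which the argument is a one-line application of monotonicity of $\max$ over a growing index set. Combined with Lemma~\ref{tLB_o1}, this monotonicity is what justifies best-first pruning: once a node's $LB_o$ exceeds $d_k$, the same holds for its entire subtree, so the subtree may be discarded safely.
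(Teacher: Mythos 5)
Your proposal is correct and follows essentially the same route as the paper: the paper's proof likewise writes the child's reference trajectory as the parent's prefix extended by one point, splits the outer maximum over $\tau^*$ into the maximum over the parent's points plus the single new term, and drops the new term — which is exactly your ``maximum over a growing index set'' step, with your explicit monotonicity of $x \mapsto \max\{x - \tfrac{\sqrt{2}\delta}{2},0\}$ handling the clamping that the paper absorbs by working with the equivalent form $\max_{p^*_k \in \tau^*}\min_{q_j \in \tau_q}\max\{d(q_j,p^*_k)-\tfrac{\sqrt{2}\delta}{2},0\}$. Both proofs establish the (correct) non-strict inequality $LB_o(\tau_q,\tau^{*'}) \geq LB_o(\tau_q,\tau^*)$, which suffices for the pruning argument.
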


\begin{proof}
Let $ \tau_p^* = \langle p^*_1,\cdots, p^*_{n-1} \rangle $ and $ \tau^* = \langle p^*_1,\cdots, p^*_n \rangle $ be the reference trajectories of a node and its child node, respectively. 
Given $ \tau_q $, we prove that $LB_o(\tau_q,\tau^*)\geq LB_o(\tau_q,\tau_p^*)$.
	\begin{equation*}
	\begin{split}
	LB_o(\tau_q,\tau^*)
	=&\max _{p^*_k \in \tau^*} \min _{q_j \in \tau_q} \max\{d(q_j,p^*_k)-\frac{ \sqrt{2}\delta}{2},0\}\\
	=&\max\{\max _{p^*_k \in \tau_p^*} \min _{q_j \in \tau_q} \max\{d(q_j,p^*_k)-\frac{ \sqrt{2}\delta}{2},0\},\\ 
	& \min _{q_j \in \tau_q} \max\{d(q_j,p^*_n)-\frac{ \sqrt{2}\delta}{2},0\}\}\\
	\geq& \max _{p^*_k \in \tau_p^*} \min _{q_j \in \tau_q} \max\{d(q_j,p^*_k)-\frac{ \sqrt{2}\delta}{2},0\}\\
	=&LB_o(\tau_q,\tau_p^*)
	\end{split}
	\end{equation*}
\end{proof}

\cmt{Note that Lemmas \ref{tLB_o1} and \ref{tLB_o2} also hold for Frechet and DTW with a modification regarding the computation of $LB_o$. The details can be found in Section \ref{sec:similarity}.}

\subsection{Pruning using Two-Side Lower Bound}
We know that one-side lower bound is designed for pruning the subtree of internal nodes. For each internal node, we only have the prefix reference trajectories from the root node to this internal node. In contrast, for a leaf node, we have complete reference trajectories of the trajectories stored in this node. Therefore, a tight lower bound, called two-side lower bound, is provided to further improve the query efficiency.
Based on the reference trajectory, the two-side lower bound is computed as follows:
\begin{definition}[Two-Side Lower Bound]	
	Given $ \tau_q$ and a leaf node whose reference trajectory is $\tau^*$, the two-side lower bound $ LB_t $ is computed as follows,
	\begin{equation}
	LB_t(\tau_q, \tau^*)=\max \left\{\textsf{D}_\textsf{H}(\tau_q,\tau^*)-\textsf{D}_{max},0  \right\}
	\end{equation}
where $\textsf{D}_{max} $ is the maximum distance from all trajectories in the current node to the reference trajectory.
\end{definition}

According to the triangle inequality, it is easy to prove that $ LB_t $ is the lower bound distance of all the trajectories stored in current node. In other word, $ LB_t $ is smaller than the distance between $\tau_q$ and any trajectories stored in the leaf node. 
The focuses of $ LB_o $ and $ LB_t $ are different. $LB_o$ is able to prune all nodes of a subtree, but its value is loose compared with $LB_t$. While $ LB_t $ is designed for pruning trajectories in a leaf node, its value is close to exact distance.
Next, we introduce how to reduce the computation overheads of both $ LB_o $ and $LB_t$ by using the intermediate computation results.

\begin{algorithm}[t]
	\caption{\textsf{CompLB}($\tau_q,p^*,r,c_{max}$)}  
	\label{CalLowerBound} 
	\LinesNumbered
	\KwIn{The query $ \tau_q $, the newly added reference point $ p^* $, an array $r$ with size $|\tau_q|$ and $c_{max}$}
	\KwOut{$ LB_o $,
		$ LB_t $,
		the updated $r$ and $c_{max}$}
	$ i \gets 1 $; $ r_{max} \gets 0 $; $ c~\gets +\infty $\;
	\While{$i \leq |\tau_q|$}{
		$ dist=d( \tau_q[i],p^*) $\;
		$ r[i]=\min\{dist,r[i]\} $\;
		$ c=\min\{dist,c\} $\;
		$ r_{max}=\max\{r[i],r_{max}\} $\;
		$ i \gets i+1$
	}
	$ c_{max} \gets \max\{c_{max},c\} $\;
	$ LB_o=\max\{c_{max}-\frac{\sqrt{2}\delta}{2},0\} $\;
	$ LB_t=\max\{\max \left\{r_{max},c_{max} \right\}-\textsf{D}_{max},0\}$\;
	\Return $ LB_o, LB_t, r, c_{max} $\;
\end{algorithm}

\subsection{Optimization using Intermediate Results}
To access each node on the RP-Trie, the computational overheads of $LB_o$ and $LB_t$ are both $ O(mn) $, where $m$ and $n $ are the lengths of the query trajectory and the reference trajectory, respectively. We propose an optimization that uses the intermediate computation results of the parent node to reduce the overhead to $ O (m) $.

Before we describe the optimization, we first review the steps of computing Hausdorff. We assume that the lengths of $ \tau_q $ and the reference trajectory $ \tau^* $ are both 2. As shown in Fig. \ref{fig:hausdorff}, $ q_1$ and $q_2 $ are points in the query trajectory, while $p^*_1$ and $p^*_2 $ are points in the reference trajectory. Let ${r_i} $ be the minimum value in row $i$, e.g., the minimum of $d(q_i,p^*_1)$ and $d(q_i,p^*_2)$. Let ${c_j} $ be the minimum value in column $j$, e.g., the minimum of $d(q_1,p^*_j)$ and $d(q_2,p^*_j)$. Let $ c_{max} $ be the maximum value among all $ {c_j} $. According to the definition of Hausdorff,
$$LB_o(\tau_q, \tau^*)=\max\{c_{max}-\frac{\sqrt{2}\delta}{2},0\},$$
and
$$
LB_t(\tau_q, \tau^*)=  \max\{\max \left\{ \max\{r_i\},c_{max} \right\}-\textsf{D}_{max},0\}.
$$
After accessing $p^*_2$, we record ${r_1,r_2}$ and $c_{max}$ as intermediate results. For a new reference trajectory $\tau_{new}^*=\langle p^*_1,p^*_2,p^*_3\rangle$, both $LB_t(\tau_q, \tau^*_{new})$ and $LB_o(\tau_q, \tau^*_{new})$ can be quickly computed by $d(q_1,p^*_3)$, $d(q_2,p^*_3)$ and intermediate results, since we only have to update the values of $c_{max}$ and all $r_i$, where
\begin{equation}
\begin{split}
r'_1=&\min\{r_1,d(q_1,p^*_3)\} \\
r'_2=&\min\{r_2,d(q_2,p^*_3)\}\\
c'_{max}=&\max\{c_{max},\min\{d(q_1,p^*_3),d(q_2,p^*_3)\}\}.
\end{split}
\end{equation}
Similarly, ${r'_1,r'_2}$ and $c'_{max}$ are the intermediate results after accessing $p^*_3$ that can be used for the subsequent computation.

Therefore, the computational overheads of $LB_o$ and $LB_t$ for each node are $ O(m) $. 
Algorithm \ref{CalLowerBound} describes the process that we compute $ LB_o $ and $LB_t$ in detail. 

\begin{figure}
	\centering
	\includegraphics[width=1.0\linewidth]{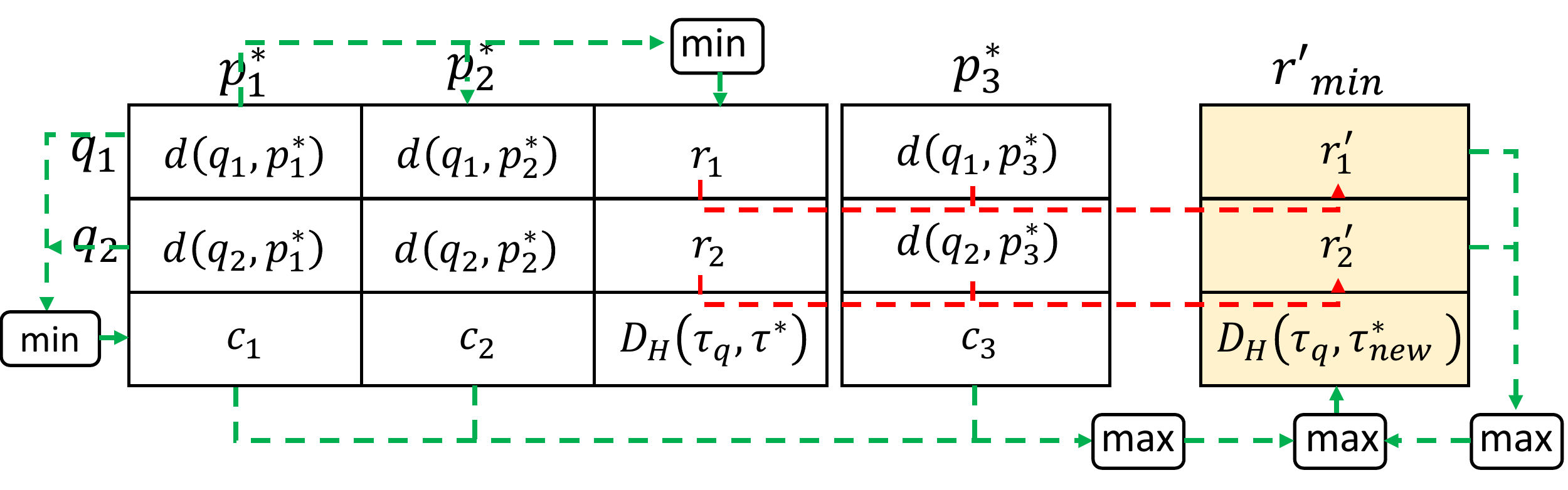}
	\caption{Distance Matrix}
	\label{fig:hausdorff}
\end{figure}

\subsection{Pivot based Pruning}\label{sec:QueryProcessing_pivot}
As the Hausdorff distance is a metric, the pivot-based pruning strategy\cite{DBLP:journals/pvldb/ChenGZJYY17} is used to further improve the query efficiency. Therefore, we propose a pruning strategy by using a pivot based lower bound $ LB_p $.

During preprocessing, we select $ N_p $ trajectories as pivot trajectories. For each node in RP-Trie, we store a distance array $\textsf{HR}$. During the query processing, we first compute the distances $d_{qp} $ between $\tau_q $ and all pivot trajectories. The time complexity is $O(N_p \cdot m \cdot n) $, where $m$ and $n $ are the lengths of $ \tau_q $ and pivot trajectory respectively. 

We compute the pivot based lower bound $ LB_p $ as follows,
\begin{equation}\label{eq:lb_p}
	LB_p=\max\{|d_{qp}[i]-\textsf{HR}[i].max-\frac{\sqrt{2}}{2}\delta|\}.
\end{equation}
Since $N_p $ is small, the time cost of computing $LB_p $ is small.

\section{Distributed Framework}\label{sec:distributed}

We proceed to cover the distributed framework. First, we discuss the drawbacks of existing global partitioning strategies. Then, we introduce our partitioning strategy that aims to ensure better computing resource utilization during query processing. Finally, we discuss how to build a local RP-Trie based on Spark RDDs.

\subsection{Drawbacks of Existing Global Partitioning Strategies}
A straightforward partitioning method is to randomly divide trajectories into different partitions such that similar numbers of trajectories are kept in each partition. However, this does not guarantee load balancing because the query times in different partitions can be quite different. 
Existing proposals, such as DITA, DFT, and DISAX \cite{DBLP:journals/tkde/YagoubiAMP20}, employ a homogeneous partitioning method that groups similar trajectories into the same partition. Then, they use a global index to prune irrelevant partitions without trajectories that are similar to the query trajectory. However, these methods have limited performance for two main reasons: 
\cmt{
\begin{itemize}
\item \textbf{Computing resource waste}. After global pruning, only compute nodes with surviving partitions compute local results, while other nodes are idle and do not contribute. 
\item \textbf{Less effective local pruning}. If the trajectories in each partition are similar, the pruning in local search is less effective since the MBRs of the local index may have large overlaps.
\end{itemize}
The homogeneous partitioning strategy targets batch search (e.g., high concurrency), where different partitions response to different query trajectories. However, not all partitions of trajectories are involved in the batch query processing if the queries are skewed. For example, ride-hailing companies tend to issue a batch of analysis queries in hot regions to increase profits. Therefore, some computing resource waste cannot be avoided.}

\subsection{Our Global Partitioning Strategy}\label{sec:distributed_method}
An effective global partitioning strategy should ensure load balancing among partitions and should accelerate local-search part of query processing. Unlike existing strategies, we propose a heterogeneous partitioning strategy that places similar trajectories in different partitions based on the following observations:
\begin{itemize}
\item As mentioned, the pruning in local search is less effective in existing methods. 
Inspired by the principle of maximum entropy, a high degree of data discrimination in a partition is likely to enable better pruning.
\item The composition structure of each partition is similar, since each partition contains trajectories from different groups of similar trajectories. It is possible that most partitions contain query results. Therefore, the query times of most partitions are similar, which enables load-balancing.
\end{itemize}

\cmt{Specifically, we use the simple but effective clustering algorithm SOM-TC \cite{DBLP:conf/icdcs/DewanGS17} to partition trajectories. 
We encode each trajectory $\tau$ as a reference trajectory $\tau^*$ using geohash. If $\tau_1^*=\tau_2^*$, we group $\tau_1$ and $\tau_2$ into a cluster. Note that we vary the space granularity of geohash to control the number of clusters. At first, we set the space granularity to a small value such that a cluster contains only one trajectory. Then we enlarge the space granularity gradually and group trajectories into larger clusters. This process stops when we reach about $N/N_{G}$ clusters, where $N$ is the dataset cardinality and $ N_{G} $ is the number of partitions.
We sort the trajectories based on their cluster id and trajectory id. Finally, we assign the trajectories from the sorted sequence to partitions in round-robin fashion.}

\subsection{Building Local Index based on RDD}
Spark Core \cite{DBLP:conf/nsdi/ZahariaCDDMMFSS12} is based on the concept of abstract Resilient Distributed Dataset (RDD), which supports two types of operations: transformation and action operations. 
However, RDDs are designed for sequential processing, and random accesses through RDDs are expensive since such accesses may simply turn into full scans over the data.
In our method, random access is inevitable. We solve this challenge by exploiting an existing method \cite{DBLP:journals/pvldb/XieLP17}, that is, we package the data and index into a new class structure and use the class as the type of RDD.

Specially, the abstract class $ \sf{Partitioner} $ is provided by Spark for data partitioning, and Spark allows users to define
their own partitioning strategies through inheritance from the abstract class. We use $ \sf{Partitioner} $ to implement the strategy presented in Section \ref{sec:distributed_method}. Then, we package all trajectories within an RDD partition and RP-Trie index into a class structure called  $ \sf{RpTraj} $ that is defined as follow.
\lstset{
 	basicstyle = \footnotesize\ttfamily
}
\begin{lstlisting}[columns=fullflexible ]
case class RpTraj (trajectory: Array, Index: RP-Trie)
\end{lstlisting}
We change the type of RDD to $ \sf{RpTraj} $.
\begin{lstlisting}[columns=fullflexible ]
type RpTrieRDD = RDD [RpTraj] 
\end{lstlisting}
The $\sf{RpTrieRDD}$ is an RDD structure. The transformation operation $ \sf{mapPartitions} $ is provided by Spark, and we use $ \sf{RpTrieRDD.mapPartitions} $ to manipulate each partition.
In particular, in each partition, we first use $ \sf{Index.build} $ to construct RP-Trie index. Then we use $ \sf{Index.query} $ to execute the top-$ k $ query. Finally, the master collects the results from each partition by $ \sf{collect} $ and determines the global top-$ k $ result.
\section{Other Similarity measures}\label{sec:similarity}
We proceed to extend our algorithm to Frechet and DTW. Due to the different properties of similarity measures, the index structures are distinguished with that for Hausdorff.

\subsection{Extension on Frechet}

The only difference with Hausdorff is that Frechet is sensitive to the order of points. So the RP-Trie is used without the trie optimization. As Frechet is also a metric, we still use pivot trajectories to accelerate the query processing.
Given $\tau_q=\{q_1,q_2,\cdots,q_m\}$ and $\tau=\{p_1,p_2,\cdots,p_n\}$, the Frechet distance between $\tau_q$ and $\tau$ is computed as follows,
\begin{equation} \label{e_frechet} 
	\begin{split}
		&\textsf{D}_\textsf{F}(\tau_q, \tau)=\\
		&\begin{cases}
			\!\max _{j=1}^{n} {d}\left(q_{1}, p_{j}\right)   &  m=1 \\
			\!\max _{i=1}^{m} {d}\left(q_{i}, p_{1}\right)  & n=1 \\
			\!\max \!\left\{
			{d}\left(q_{m}, p_{n}\right), \min \left\{
			\textsf{D}_\textsf{F}\left(\tau_q^{m-1},\tau^{n-1}\right), \right.\right. & \\
			\!\left.\left. \textsf{D}_\textsf{F}\left(\tau_q^{m-1}, \tau\right), \textsf{D}_\textsf{F} \left(\tau_q, \tau^{n-1}\right)\right\}
			\right\}  & \text { otherwise } 
		\end{cases}
	\end{split}
\end{equation}
where $ \tau^{m-1} $ is the prefix of $ \tau $ by removing the last point.

We modify the computation of $ LB_o^{\textsf{F}}$ and $LB_t^{\textsf{F}}$ of Frechet, 
\begin{equation}
	LB_o^{\textsf{F}}(\tau_q, \tau^*)=\max\{ c_{min}-\frac{ \sqrt{2}\delta}{2},0\} 
\end{equation}
\begin{equation}\label{eq:lbt_frechet}
	LB_t^{\textsf{F}}(\tau_q, \tau^*)=\max\{ {f_{m,n}}-\frac{\sqrt{2}\delta}{2},0\} 
\end{equation}
where $f_{i,j}=\textsf{D}_\textsf{F}\left(\tau_q^i,\tau^{*j}\right)$ is the element of the $i$-th row and the $j$-th column of the distance matrix, and $ c_{min} $ is the minimum value of the newly added column, e.g., $ c_{min}=\min\left\lbrace f_{1,4},f_{2,4},f_{3,4}\right\rbrace $, as shown in Fig. \ref{fig:frechet}.
Fortunately, we are still able to quickly compute $ LB_o^{\textsf{F}}$ and $LB_t^{\textsf{F}}$ through the intermediate results. Given the current last column values $ \left\lbrace f_{1,3},f_{2,3},f_{3,3} \right\rbrace $, the values in the new column are computed as follows,
\begin{equation}
f_{i,4}=\max \left\{
{d}\left(q_{i}, p^*_{4}\right), \min \left\{f_{i-1,3},f_{i-1,4},f_{i,3} \right\} \right\}.
\end{equation}

\begin{figure}
	\centering
	\includegraphics[width=.65\linewidth]{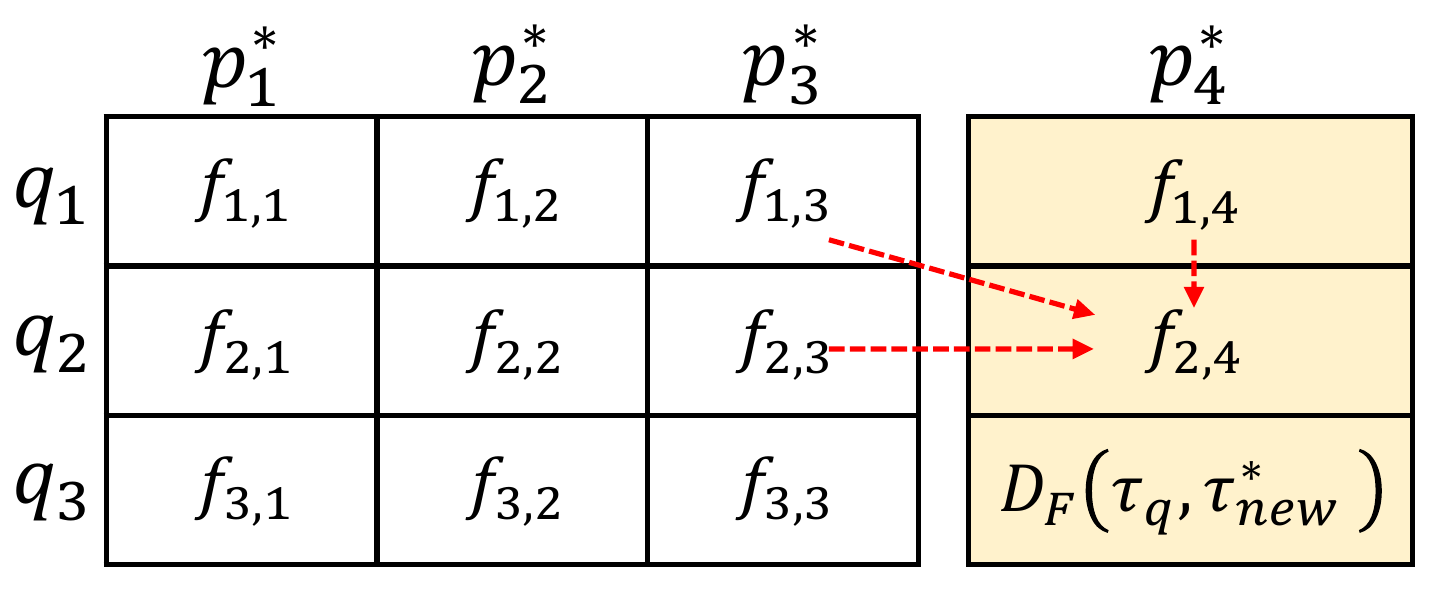}	
	\caption{Distance Matrix of Frechet}
	\label{fig:frechet}
\end{figure}

In addition, as Frechet is a metric, we can still use the pivot-based lower bound $ LB_p^{\textsf{F}}$ for pruning.
\begin{lemma}\label{le:frechet}
	$LB_o^{\textsf{F}}$, $LB_t^{\textsf{F}}$ and $LB_p^{\textsf{F}}$ have the same properties as those for Hausdorff.
	\begin{enumerate}
	\item	The $LB_o^{\textsf{F}}$, $LB_t^{\textsf{F}}$ and $LB_p^{\textsf{F}}$ of a leaf node are all smaller than the minimum Frechet distance between $\tau_q$ and any trajectory stored in the node.
	\item	For an internal node, the $LB_o^{\textsf{F}}$ between its child node and $\tau_q$ is larger than that between the node and $\tau_q$.
	\end{enumerate} 
\end{lemma}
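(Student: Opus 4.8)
The plan is to reduce both claims to two ingredients: the fact that the discrete Frechet distance is a metric (hence obeys the triangle inequality), and a within-cell stability bound. First I would establish that for any trajectory $\tau$ stored in a leaf node and its reference trajectory $\tau^*$ we have $\textsf{D}_\textsf{F}(\tau,\tau^*)\le \frac{\sqrt{2}\delta}{2}$: since $\tau^*$ is obtained from $\tau$ by replacing each point $p_i$ with the center $p^*_i$ of the cell containing it, the identity coupling matches $p_i$ with $p^*_i$, and every such pair lies within a half-diagonal $\frac{\sqrt{2}\delta}{2}$ of the cell. Because Frechet (unlike Hausdorff) is order sensitive and the trie optimization is not applied, this point-for-point matching is legitimate, and this single inequality drives the entire leaf-node argument, mirroring Lemma \ref{tLB_o1}.

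For part (1), I would treat the three bounds in increasing order of tightness. For $LB_t^{\textsf{F}}$, the triangle inequality gives $\textsf{D}_\textsf{F}(\tau_q,\tau)\ge \textsf{D}_\textsf{F}(\tau_q,\tau^*)-\textsf{D}_\textsf{F}(\tau^*,\tau)\ge f_{m,n}-\frac{\sqrt{2}\delta}{2}$, where $f_{m,n}=\textsf{D}_\textsf{F}(\tau_q,\tau^*)$; taking the maximum with $0$ yields $LB_t^{\textsf{F}}\le \textsf{D}_\textsf{F}(\tau_q,\tau)$ for every stored $\tau$. For $LB_o^{\textsf{F}}$, I would observe that $c_{min}=\min_i f_{i,n}\le f_{m,n}$ since $f_{m,n}$ is one of the terms in the minimum, so $LB_o^{\textsf{F}}\le LB_t^{\textsf{F}}\le \textsf{D}_\textsf{F}(\tau_q,\tau)$ follows at once. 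For $LB_p^{\textsf{F}}$, I would again invoke the triangle inequality against each pivot $P_i$, namely $\textsf{D}_\textsf{F}(\tau_q,\tau)\ge |d_{qp}[i]-\textsf{D}_\textsf{F}(\tau,P_i)|$, and then absorb the gap between $\textsf{D}_\textsf{F}(\tau,P_i)$ and the stored $\textsf{HR}[i].max$ using the same half-diagonal slack $\frac{\sqrt{2}\delta}{2}$, yielding a bound of the form in Eq. \eqref{eq:lb_p}.

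For part (2), the goal is the monotonicity $LB_o^{\textsf{F}}(\tau_q,\tau^{*(j+1)})\ge LB_o^{\textsf{F}}(\tau_q,\tau^{*(j)})$ as we descend from a node to its child, which, after stripping the common $-\frac{\sqrt{2}\delta}{2}$ and $\max\{\cdot,0\}$, reduces to $\min_i f_{i,j+1}\ge \min_i f_{i,j}$. I expect this to be the main obstacle, because unlike Hausdorff, where appending a reference point merely adds a term to an outer maximum, the Frechet value is produced by the coupled recursion of Eq. \eqref{e_frechet}, so I must track how the whole column evolves. The key lemma I would prove is that $f_{i,j+1}\ge c_{min}^{(j)}:=\min_{i'} f_{i',j}$ for every row $i$, by induction on $i$. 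The base case $i=1$ holds because the single-row Frechet value is a prefix maximum $\max_{l\le j+1} d(q_1,p^*_l)$, which is nondecreasing in the number of columns, so $f_{1,j+1}\ge f_{1,j}\ge c_{min}^{(j)}$. For the inductive step, the recursion gives $f_{i,j+1}\ge \min\{f_{i-1,j},\,f_{i-1,j+1},\,f_{i,j}\}$, where $f_{i-1,j}$ and $f_{i,j}$ are column-$j$ entries bounded below by $c_{min}^{(j)}$ and $f_{i-1,j+1}\ge c_{min}^{(j)}$ by the induction hypothesis, hence $f_{i,j+1}\ge c_{min}^{(j)}$. Taking the minimum over $i$ gives $c_{min}^{(j+1)}\ge c_{min}^{(j)}$, which establishes the monotonicity and completes the lemma, paralleling Lemma \ref{tLB_o2}.
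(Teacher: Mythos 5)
Your proposal is correct and follows essentially the same route as the paper's proof: part (1) via the triangle inequality together with the half-cell-diagonal bound $\textsf{D}_\textsf{F}(\tau,\tau^*)\le\frac{\sqrt{2}\delta}{2}$ (plus the observation $LB_o^{\textsf{F}}\le LB_t^{\textsf{F}}$, so only $LB_t^{\textsf{F}}$ and $LB_p^{\textsf{F}}$ need direct treatment), and part (2) via showing every entry of the new column dominates the previous column's minimum, using the single-row prefix-maximum base case and the three-term recursion for the step. The only differences are presentational: you state the column-dominance claim as an explicit induction on $i$ and justify the within-cell coupling bound that the paper merely asserts, whereas the paper writes out the $i=1,2$ cases with $h=\arg\min_i f_{i,j-1}$ and then generalizes.
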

The first property indicates that $LB_o^{\textsf{F}}$, $LB_t^{\textsf{F}}$ and $LB_p^{\textsf{F}}$ are correct lower bounds. The second property allows us to prune the sub-tree of an internal node with $LB_o^{\textsf{F}}$. 
\begin{proof}
For property 1, since $LB_t^{\textsf{F}}(\tau_q, \tau^*)\geq LB_o^{\textsf{F}}(\tau_q, \tau^*)$, we only need to prove that $LB_t^{\textsf{F}}$ and $LB_p^{\textsf{F}}$ are correct lower bounds. According to Eq. \ref{eq:lbt_frechet} and ${f_{m,n}}=\textsf{D}_\textsf{F}(\tau_q, \tau^*)$, we have
$$
LB_t^{\textsf{F}}(\tau_q, \tau^*)=\max\{ \textsf{D}_\textsf{F}(\tau_q, \tau^*)-\frac{\sqrt{2}\delta}{2},0\}.
$$
For a trajectory $\tau$ with a reference trajectory $\tau^*$, we have $\textsf{D}_\textsf{F}(\tau, \tau^*)\leq\frac{\sqrt{2}\delta}{2}$. Therefore, due to the triangle inequality, we have $LB_t^{\textsf{F}}(\tau_q, \tau^*)\leq \textsf{D}_\textsf{F}(\tau_q, \tau)$. Similarly, it is easy to have that $LB_p^{\textsf{F}}(\tau_q, \tau^*)\leq \textsf{D}_\textsf{F}(\tau_q, \tau)$.

For property 2, we prove that $LB_o^{\textsf{F}}(\tau_q, \tau^{*j}) \geq LB_o^{\textsf{F}}(\tau_q, \tau^{*{j-1}})$, $2 \leq j \leq n$. According to Eq. \ref{eq:lbt_frechet},
\begin{equation}
		\begin{split}
			LB_o^{\textsf{F}}(\tau_q, \tau^{*j})&=\max\{ \min{f_{i,j}}-\frac{ \sqrt{2}\delta}{2},0\} \\
			LB_o^{\textsf{F}}(\tau_q, \tau^{*{j-1}})&=\max\{ \min{f_{i,j-1}}-\frac{ \sqrt{2}\delta}{2},0\} 
		\end{split}
\end{equation}
Assume that $h = \arg\min_i f_{i,j-1}$, we have 
\begin{equation}
		\begin{split}
			f_{1,j}&=\max \left\{
			{d}\left(q_{1}, p^*_{j}\right), f_{1,j-1} \right\} \\
			&\geq f_{1,j-1}\geq f_{h,j-1}\\
			f_{2,j}&=\max \left\{
			{d}\left(q_{1}, p^*_{j}\right), \min \left\{f_{1,j-1},f_{1,j},f_{2,j-1} \right\} \right\} \\
			&\geq \min \left\{f_{1,j-1},f_{1,j},f_{2,j-1}\right\}
			\geq f_{h,j-1}
		\end{split}
\end{equation}
Therefore, we have ${f_{i,j}}\geq f_{h,j-1}, 1 \leq i \leq m$, and $\min{f_{i,j}}\geq f_{h,j-1}$. Finally, $LB_o^{\textsf{F}}(\tau_q, \tau^{*j})\geq LB_o^{\textsf{F}}(\tau_q, \tau^{*{j-1}})$.
\end{proof}

\subsection{Extension on DTW}
Since DTW is not a metric and is sensitive to the order of trajectory points, thus only the basic RP-Trie structure is used.
The DTW between $\tau_q$ and $\tau$ is computed as follows, 

\begin{equation} \label{e_dtw}
	\begin{split}
		&{\textsf{D}_\textsf{DTW}}(\tau_q, \tau)=\\
		&\begin{cases}
			\sum_{j=1}^{n} {d}\left(q_{1}, p_{j}\right) &  m=1 \\
			\sum_{i=1}^{m} {d}\left(q_{i}, p_{1}\right) & n=1 \\
			{d}\left(q_{m}, p_{n}\right)+\min \left\{
			{\textsf{D}_\textsf{DTW}}\left(\tau_q^{m-1},\tau^{n-1}\right), \right. & \\ \left.
			{\textsf{D}_\textsf{DTW}}\left(\tau_q^{m-1}, \tau\right),  {\textsf{D}_\textsf{DTW}} 
			\left(\tau_q, \tau^{n-1}\right)
			\right\} & \text { otherwise }
		\end{cases}
	\end{split}
\end{equation}
The computations of $ LB_o^{\textsf{DTW}} $ and $ LB_t^{\textsf{DTW}} $ are similar to those of Frechet, which are computed as follows. 
\begin{equation}
	LB_o^{\textsf{DTW}} (\tau_q, \tau^*)=c_{min}
\end{equation}
\begin{equation}
	LB_t^{\textsf{DTW}} (\tau_q, \tau^*)={f_{i,n}}
\end{equation}
where $f_{i,j}=\textsf{D}_\textsf{DTW}\left(\tau_q^i,\tau^{*j}\right)$ is the element of the $i$-th row and the $j$-th column of the distance matrix, and $ c_{min} $ is the minimum value of the newly added column, i.e., $c_{min}=\min_{i} \{f_{i,j}\}$ at the $j$-th column.
We update $f_{i,j}$ as follows.
\begin{equation}
	f_{i,j}=
	{d'}(q_{i}, p^*_{j})+\min \left\{f_{i-1,j-1},f_{i-1,j},f_{i,j-1} \right\}.
\end{equation}
\cmt{Note that ${d'}(q_{i}, p^*_{j})$ is the minimum distance between $q_i$ and the cell that $p_j^*$ belongs to. We use it instead of $d(q_i,p_j^*)$ since the triangle inequality does not apply to DTW.}

\begin{lemma}
	$LB_o^{\textsf{DTW}}$, $LB_t^{\textsf{DTW}}$ have the same properties as those for Hausdorff.
	\begin{enumerate}
		\item	The $LB_o^{\textsf{DTW}}$, $LB_t^{\textsf{DTW}}$ of a leaf node are smaller than the minimum DTW distance between $\tau_q$ and any trajectory stored in the node.
		\item	For an internal node, the $LB_o^{\textsf{DTW}}$ between its child node and $\tau_q$ is larger than that between the node and $\tau_q$.
	\end{enumerate} 
\end{lemma}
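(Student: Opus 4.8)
The plan is to prove both properties directly from the recurrence structure of DTW, rather than from the triangle inequality used for Hausdorff and Frechet. This detour is forced: as the paper notes, DTW is not a metric, so the argument behind $LB_t^{\textsf{F}}$---bounding $\textsf{D}_\textsf{F}(\tau,\tau^*)$ by $\sqrt2\delta/2$ and invoking the triangle inequality---has no DTW analogue. The soundness of the bound instead rests entirely on the surrogate cost $d'(q_i,p^*_j)$, the minimum distance from $q_i$ to the cell containing $p^*_j$. Since $LB_o^{\textsf{DTW}}=c_{min}=\min_i f_{i,n}\le f_{m,n}=LB_t^{\textsf{DTW}}$ holds by definition, for Property 1 it suffices to show that $LB_t^{\textsf{DTW}}$ is a valid lower bound; Property 2 I would handle by a separate column-monotonicity induction.

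For Property 1, first I would record the per-cell inequality: for any trajectory $\tau=\langle p_1,\dots,p_n\rangle$ stored in the leaf, each $p_j$ lies in the cell whose reference point is $p^*_j$, so $d(q_i,p_j)\ge d'(q_i,p^*_j)$ for every $i,j$ (here $\tau$ and $\tau^*$ share the same length, as DTW uses the basic RP-Trie without deduplication). Next I would compare two DTW matrices: let $\tilde f_{i,j}$ denote the values computed with the true costs $d(q_i,p_j)$, so $\tilde f_{m,n}=\textsf{D}_\textsf{DTW}(\tau_q,\tau)$, and $f_{i,j}$ the values computed with $d'$. By induction on $(i,j)$ over the recurrence in Eq.~\ref{e_dtw}, I claim $\tilde f_{i,j}\ge f_{i,j}$: the boundary row and column are cumulative sums of per-cell costs, each dominating its $d'$ counterpart, and in the general case $\tilde f_{i,j}=d(q_i,p_j)+\min\{\tilde f_{i-1,j-1},\tilde f_{i-1,j},\tilde f_{i,j-1}\}\ge d'(q_i,p^*_j)+\min\{f_{i-1,j-1},f_{i-1,j},f_{i,j-1}\}=f_{i,j}$, using the inductive hypothesis together with monotonicity of $\min$. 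Taking $(i,j)=(m,n)$ yields $\textsf{D}_\textsf{DTW}(\tau_q,\tau)\ge f_{m,n}=LB_t^{\textsf{DTW}}$, and the ordering $LB_o^{\textsf{DTW}}\le LB_t^{\textsf{DTW}}$ settles the claim for $LB_o^{\textsf{DTW}}$ as well.

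For Property 2, with the parent's reference prefix of length $j-1$ and the child's of length $j$, I need $c_j=\min_i f_{i,j}\ge\min_i f_{i,j-1}=c_{j-1}$. Mirroring the Frechet argument, I would show $f_{i,j}\ge c_{j-1}$ for every row $i$ by induction on $i$. The first row is a cumulative sum, so $f_{1,j}=f_{1,j-1}+d'(q_1,p^*_j)\ge f_{1,j-1}\ge c_{j-1}$. For the inductive step, in $f_{i,j}=d'(q_i,p^*_j)+\min\{f_{i-1,j-1},f_{i-1,j},f_{i,j-1}\}$ the two column-$(j-1)$ entries are $\ge c_{j-1}$ by the definition of $c_{j-1}$, the entry $f_{i-1,j}$ is $\ge c_{j-1}$ by the inductive hypothesis, and $d'(q_i,p^*_j)\ge0$; hence $f_{i,j}\ge c_{j-1}$. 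Minimizing over $i$ gives $c_j\ge c_{j-1}$, which is exactly Property 2.

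The main obstacle is conceptual rather than computational: the lack of a triangle inequality means the Hausdorff/Frechet template cannot be reused, and the entire correctness of $LB_t^{\textsf{DTW}}$ hinges on substituting $d(q_i,p_j)$ with $d'(q_i,p^*_j)$. The delicate point to get right is therefore the matrix-domination induction in Property 1---ensuring that $d(q_i,p_j)\ge d'(q_i,p^*_j)$ propagates through all three branches of the recurrence and through the summation-based boundary cases, which differ from the single-predecessor boundary seen in Frechet. Once that monotonicity is in place, both the $LB_o^{\textsf{DTW}}\le LB_t^{\textsf{DTW}}$ ordering and the column-monotonicity of Property 2 follow from the non-negativity of $d'$ with little additional effort.
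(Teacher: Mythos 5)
Your proof is correct, and it is worth noting that the paper itself gives no argument here at all: its ``proof'' is the single sentence that the proof is similar to that of the Frechet lemma. Your write-up exposes exactly where that deferral is loose. For Property~2 you do reproduce the paper's Frechet argument faithfully --- a row induction showing every entry $f_{i,j}$ of the new column dominates $c_{j-1}=\min_i f_{i,j-1}$ --- with the one DTW-specific addition that the additive recurrence requires $d'(q_i,p^*_j)\ge 0$, where Frechet's max-based recurrence needed no such ingredient. For Property~1, however, the Frechet template genuinely breaks, as you observe: the paper's Frechet proof bounds $\textsf{D}_\textsf{F}(\tau,\tau^*)\le\frac{\sqrt{2}\delta}{2}$ and invokes the triangle inequality, which is unavailable for DTW; the paper tacitly concedes this when it introduces $d'$ (``since the triangle inequality does not apply to DTW'') but never writes the replacement argument. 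Your matrix-domination induction --- the cell-wise inequality $d(q_i,p_j)\ge d'(q_i,p^*_j)$, justified because $\tau$ and $\tau^*$ are aligned point-for-point in the basic trie, propagated through all three branches of the recurrence in Eq.~(\ref{e_dtw}) and through the cumulative-sum boundary rows, yielding $\textsf{D}_\textsf{DTW}(\tau_q,\tau)\ge f_{m,n}$ --- is precisely the missing step, and your reading of the paper's $LB_t^{\textsf{DTW}}=f_{i,n}$ as $f_{m,n}$ correctly repairs an evident typo. The chain $LB_o^{\textsf{DTW}}=\min_i f_{i,n}\le f_{m,n}=LB_t^{\textsf{DTW}}$ then disposes of the $LB_o^{\textsf{DTW}}$ half of Property~1 just as in the paper's Frechet proof. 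In short, your skeleton matches what the paper intends, but your proposal supplies the Property~1 argument that the paper's appeal to similarity silently skips, which is the substantive content of the lemma in the non-metric setting.
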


The proof is similar to that of Lemma \ref{le:frechet}.

\cmt{The basic RP-Trie and the heterogeneous partitioning strategy are applicable to other distance measures. For metrics, such as ERP, the RP-Trie with the trie optimization and the pivot-based pruning can be employed in a similar way as for Frechet. For non-metrics, such as LCSS and EDR, the basic RP-Trie can be used similarly as for DTW.}
\section{EXPERIMENTS}\label{sec:experiment}

\subsection{Experimental Setup}
All experiments are conducted on a cluster with 1 master node and 16 worker nodes. Each node has a 4-core Intel Xeon (Cascade Lake) Platinum 8269CY @ 2.50GHz processor and 32GB main memory and runs Ubuntu 18.04 LTS with Hadoop 2.6.0 and Spark 2.2.0.

\textbf{Datasets}. We conduct experiments on 3 types of datasets.
\begin{enumerate}
	\item Small scale and small spatial span: San Francisco (SF)\footnote{http://sigspatial2017.sigspatial.org/giscup2017/home}, Porto\footnote{https://www.kaggle.com/c/pkdd-15-predict-taxiservice-trajectory-i}, Rome\footnote{http://crawdad.org/roma/taxi/20140717}, T-drive \cite{DBLP:journals/tkde/YuanZXS13}.
	\item Large scale and small spatial span: Chengdu and Xi'an\footnote{https://gaia.didichuxing.com}.
	\item Large scale and large spatial span: OSM\footnote{https://www.openstreetmap.org}.
\end{enumerate}

The dataset statistics are shown in Table \ref{tb:dataset}. In the preprocessing stage, we remove the trajectories with length smaller than 10, and we split the trajectories with length larger than 1,000 into multiple trajectories. We uniformly and randomly select 100 trajectories as the query set.

\begin{table}[htb] 
	\caption{Statistics of Datasets}\label{tb:dataset}
	\centering \small
	\renewcommand\arraystretch{1.1}
	\begin{tabular}{|m{1.1cm}<{\centering}|m{1.5cm}<{\centering}|m{1.0cm}<{\centering}|m{1.8cm}<{\centering}|m{1.4cm}<{\centering}|}
		\hline 
		{\bf \centering Datasets} & {\bf \centering Cardinality} & {\bf \centering AvgLen} & {\bf \centering Spatial span}  & {\bf \centering Size (GB)} \\ \hline
		T-drive & 356,228 & 22.6 &(1.89$^{\circ}$, 1.17$^{\circ}$)&0.16\\\hline
		SF & 343,696& 27.5&(0.54$^{\circ}$, 0.76$^{\circ} $)&0.19\\\hline
		Rome & 99,473& 152.4 &(1.21$^{\circ}$, 0.86$^{\circ}$)&0.28\\\hline
		Porto & 1,613,284& 48.9&(11.7$^{\circ}$, 14.2$^{\circ}$) &1.24\\\hline
		Xi'an & 6,645,727 & 230.1&(0.09$^{\circ}$, 0.08$^{\circ}$) &26.8\\\hline
		Chengdu & 11,327,466 & 188.9 &(0.09$^{\circ}$, 0.07$^{\circ}$)&37.7\\\hline
		OSM & 4,464,399 & 596.3&(360$^{\circ}$, 180$^{\circ}$)&50.8\\\hline	
	\end{tabular}
\end{table}

\textbf{Competing algorithms}. We compare the performance of \textsf{REPOSE} with three baseline algorithms:
\begin{enumerate}
	\item \textbf{DFT}: A segment-based distributed top-$k$ query algorithm that supports Hausdorff, Frechet, and DTW. Specifically, we choose the variant DFT-RB+DI that has the best query performance and the largest space overhead.
	
	\item \textbf{DITA}: An in-memory distributed range query algorithm. In order to support top-$k$ query, DITA first estimates a threshold and then executes a range query to find candidate trajectories. Finally, DITA finds the $k$ most similar trajectories. DITA does not support Hausdorff.
	
	\item \textbf{LS}: Brute-force linear search (LS) that computes the distances between the query and all trajectories in each partition and merges the results.
\end{enumerate}

\textbf{Parameter settings}. 
For DFT, we set the partition pruning parameter $ C = 5 $. For DITA, we set $ N_L = 32$ and the pivot size is set to $4$. The pivot selection strategy is the neighbor distance strategy. For \textsf{REPOSE}, due to the different spatial spans and data densities of the datasets, we choose different values of $\delta$.
For the SF, Porto, and Roma, we set $\delta=0.05$. For T-drive, we set $\delta=0.15$. For OSM, we set $\delta=1.0$. Next, for the Chengdu dataset, we set $\delta=0.01$ for Hausdorff, and $\delta=0.02$ for Frechet and DTW. Finally, for the Xi'an dataset, we set $\delta=0.01$ for Hausdorff, and $\delta = 0.03$ for Frechet and DTW. We choose $N_p=5$ pivot trajectories.

For queries, we use $k=100$ as the default.
\cmt{Since we have 16 worker nodes, each of which has a 4-core processor, we set the default number of partitions to 64 (each core processes a partition).}

\textbf{Performance metrics}. We study three performance metrics: (1) query time (QT), (2) index size (IS), and (3) index construction time (IT). We repeat each query 20 times and report the average values.
\cmt{
The index construction time of \textsf{REPOSE} includes the time for converting trajectories to reference trajectories, clustering the trajectories, and building the trie.
}

\subsection{Performance Evaluation}
We compare the algorithms on all datasets and three similarity measures (Hausdorff, Frechet, DTW). A performance overview is shown in Table \ref{tb:performance_overview}.
Due to the space limitation, we only report the performances on the T-drive, Xi'an, and OSM datasets and for the Hausdorff and Frechet distances in the remaining experiments.

\renewcommand\arraystretch{1.3}
\begin{table}[t]
	\centering
	\caption{Performance Overview}
	\label{tb:performance_overview}
	\scriptsize
	\resizebox{0.5\textwidth}{!}{%
		\begin{tabular}{|m{0.65cm}<{\centering}|m{0.85cm}<{\centering}|m{0.9cm}<{\centering}|m{0.4cm}<{\centering}|m{0.55cm}<{\centering}|m{0.55cm}<{\centering}|m{0.8cm}<{\centering}|m{0.65cm}<{\centering}|m{0.8cm}<{\centering}|m{0.6cm}<{\centering}|}
			\hline
			\textbf{Metric}&\textbf{Distance}&\textbf{Algorithm}& \textbf{SF} & \textbf{Porto} & \textbf{Rome} & \textbf{T-drive} &\textbf{Xi'an}&\textbf{Chengdu}&\textbf{OSM} \\
			\hline
			\hline
			\multirow{12}*{\textbf{QT (s)}}&\multirow{4}*{Hausdorff}&\textbf{\textsf{REPOSE}}&\textbf{1.51} &\textbf{1.84} &\textbf{4.51} &\textbf{1.29} &\textbf{19.14} &\textbf{30.55} &\textbf{32.16} \\			
			\cline{3-10}
			& &\textbf{DITA}& / &/ &/ &/ &/ &/ &/ \\
			\cline{3-10}	
			& &\textbf{DFT}&4.03 &32.91 &14.99 &2.50 &1857.88 &2462.54 &273.68 \\
			\cline{3-10}
			& &\textbf{LS}&2.46 &1.97 &4.64 &1.95 &88.68 &119.11 &393.36 \\
			\cline{3-10}
			\cline{2-3}	
			& \multirow{4}*{Frechet}&\textbf{\textsf{REPOSE}}&\textbf{1.73} &\textbf{1.88} &\textbf{6.87} &\textbf{1.28} &\textbf{43.27} &\textbf{55.39} &\textbf{35.61} \\
			\cline{3-10}	
			& &\textbf{DITA}&2.67 &1.92 &15.30 &2.39 &91.31 &62.87 &94.12 \\
			\cline{3-10}	
			& &\textbf{DFT}&3.89 &32.03 &14.99 &2.33 &2100.80 &2284.49 &418.40\\
			\cline{3-10}
			& &\textbf{LS}&2.88 &2.41 &7.32 &2.20 &119.89 &156.85 &634.78\\
			\cline{3-10}
			\cline{2-3}	
			& \multirow{4}*{DTW}&\textbf{\textsf{REPOSE}}&\textbf{1.98} &\textbf{2.23} &\textbf{6.99} &\textbf{1.48} &\textbf{180.09} &\textbf{189.52} &\textbf{499.52}\\	
			\cline{3-10}
			& &\textbf{DITA}&3.15 &2.32 &11.06 &2.78 &186.36 &199.98 &509.53\\	
			\cline{3-10}
			& &\textbf{DFT}&3.88 &32.05 &14.96 &2.36&2472.35 &2300.35 &525.38\\
			\cline{3-10}
			& &\textbf{LS}&2.37 &2.35 &7.04 &1.92 &202.46 &225.92 &608.89\\
			\cline{3-10}
			\cline{2-3}	
			\cline{1-2}
			\multirow{12}*{\textbf{IS (GB)}}&\multirow{4}*{Hausdorff}&\textbf{\textsf{REPOSE}}&\textbf{0.19} &\textbf{1.68} &\textbf{0.28} &\textbf{0.16} &\textbf{28.01} &\textbf{39.40} &\textbf{47.92} \\			
			\cline{3-10}
			& &\textbf{DITA}&/ &/ &/ &/ &/ &/ &/ \\
			\cline{3-10}	
			& &\textbf{DFT}&1.03 &6.10 &1.39 &0.85 &142.30 &213.71 &235.06 \\
			\cline{3-10}
			& &\textbf{LS}&/ &/ &/ &/ &/ &/ &/ \\
			\cline{3-10}
			\cline{2-3}	
			& \multirow{4}*{Frechet}&\textbf{\textsf{REPOSE}}&\textbf{0.22} &\textbf{1.71} &\textbf{0.31} &\textbf{0.17} &32.13 &56.26 &60.39\\
			\cline{3-10}	
			& &\textbf{DITA}&0.23 &1.81 &0.32 &0.21 &\textbf{31.43} &\textbf{44.52} &\textbf{58.60}\\
			\cline{3-10}	
			& &\textbf{DFT}&1.03 &6.10 &1.39 &0.85 &142.30 &213.71 &235.06\\
			\cline{3-10}
			& &\textbf{LS}&/ &/ &/ &/ &/ &/ &/ \\
			\cline{3-10}
			\cline{2-3}	
			& \multirow{4}*{DTW}&\textbf{\textsf{REPOSE}}&\textbf{0.22} &\textbf{1.71} &\textbf{0.31} &\textbf{0.17} &32.13 &56.26 &60.39\\	
			\cline{3-10}
			& &\textbf{DITA}&0.23 &1.81 &0.32 &0.21 &\textbf{31.43} &\textbf{44.52} &\textbf{58.60}\\	
			\cline{3-10}
			& &\textbf{DFT}&1.03 &6.10 &1.39 &0.85 &142.30 &213.71 &235.06\\
			\cline{3-10}
			& &\textbf{LS}&/ &/ &/ &/ &/ &/ &/\\
			\cline{3-10}
			\cline{2-3}	
			\cline{1-2}
			\multirow{12}*{\textbf{IT (s)}}&\multirow{4}*{Hausdorff}&\textbf{\textsf{REPOSE}}&\textbf{3.96} &\textbf{5.16} &\textbf{4.40} &\textbf{3.47} &\textbf{75.41} &\textbf{119.08} &\textbf{100.96}\\			
			\cline{3-10}
			& &\textbf{DITA}&/ &/ &/ &/ &/ &/ &/\\
			\cline{3-10}	
			& &\textbf{DFT}&17.28 &75.80 &20.11 &15.32 &1407.29 &2071.53 &3389.59\\
			\cline{3-10}
			& &\textbf{LS}&/ &/ &/ &/ &/ &/ &/ \\
			\cline{3-10}
			\cline{2-3}	
			& \multirow{4}*{Frechet}&\textbf{\textsf{REPOSE}}&\textbf{5.27} &\textbf{5.19} &\textbf{5.34 }&\textbf{4.30} &\textbf{97.61} &\textbf{148.63} &\textbf{148.98} \\
			\cline{3-10}	
			& &\textbf{DITA}&20.24 &20.42 &18.21 &21.54 &235.02 &323.38 &349.31 \\
			\cline{3-10}	
			& &\textbf{DFT}&17.28 &75.80 &20.11 &15.32 &1407.29 &2071.53 &3389.59\\
			\cline{3-10}
			& &\textbf{LS}&/ &/ &/ &/ &/ &/ &/ \\
			\cline{3-10}
			\cline{2-3}	
			& \multirow{4}*{DTW}&\textbf{\textsf{REPOSE}}&\textbf{5.32} &\textbf{5.01} &\textbf{5.52} &\textbf{4.10} &\textbf{99.71} &\textbf{152.33} &\textbf{149.08}\\	
			\cline{3-10}
			& &\textbf{DITA}&17.99 &21.15 &17.94 &18.28 &240.75 &326.72 &363.95\\	
			\cline{3-10}
			& &\textbf{DFT}&19.98 &75.81 &23.71 &14.12 &1417.10 &2061.31 &3379.60\\
			\cline{3-10}
			& &\textbf{LS}&/ &/ &/ &/ &/ &/ &/ \\
			\cline{1-10}
			\hline	
		\end{tabular}
	}
\end{table}

\textbf{Performance overview}. 
We make the following observations from Table \ref{tb:performance_overview}: 
(1) In terms of query time and index construction time, \textsf{REPOSE} significantly outperforms the baseline methods. 
\cmt{
\textsf{REPOSE} also maintains a smaller index than DFT and DITA in most cases. \textsf{REPOSE} has a slightly larger index than DITA on Chengdu, Xi'an, and OSM for the Frechet and DTW distances.
This occurs because DITA compresses all trajectories into fixed-length representative trajectories. Nevertheless, \textsf{REPOSE} can capture trajectory features adaptively and is able to improve the query efficiency.
}
(2) The index size of DFT is about 4 times those of \textsf{REPOSE} and DITA on all datasets. For example on Xi'an, DFT requires 142GB, while \textsf{REPOSE} and DITA only require 32GB. 
The reason is that DFT needs to regroup line segments into trajectories when computing distances and needs a dual index that takes up extra space.
(3) For the first two types of datasets, it is interesting to see that the query time of LS is smaller than those of DITA and DFT. 
\cmt{
The reason is that the datasets are small, so the pruning capabilities are not utilized fully. In addition, some index-specific operations add computational overhead. For large datasets with small spatial span, the data density is high, which renders pruning more challenging.
}

\begin{figure}[t]
	\centering
	\includegraphics[width=.4\textwidth]{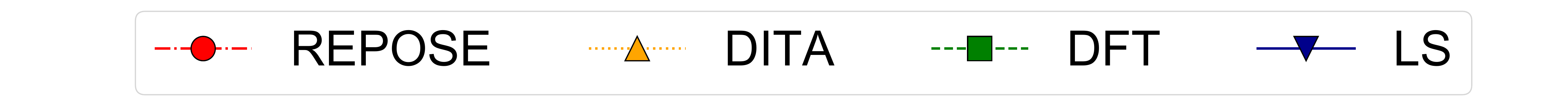} 
	\subfigure[T-drive with Hausdorff]{
		\begin{minipage}[c]{0.45\linewidth}
			\centering
			\includegraphics[width=1\textwidth]{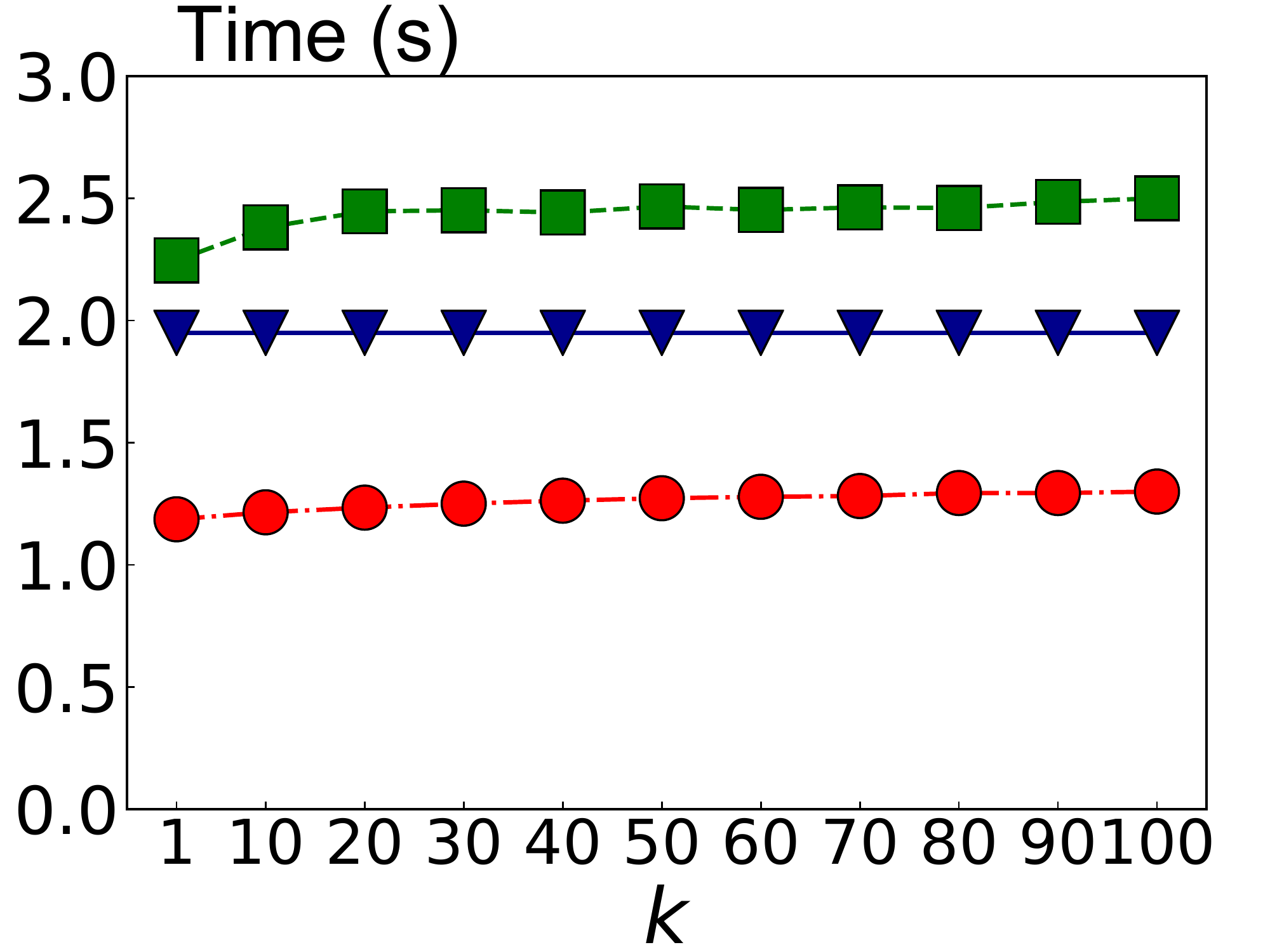}
		\end{minipage}
	}
	\subfigure[T-drive with Frechet]{
		\begin{minipage}[c]{0.45\linewidth}
			\centering
			\includegraphics[width=1\textwidth]{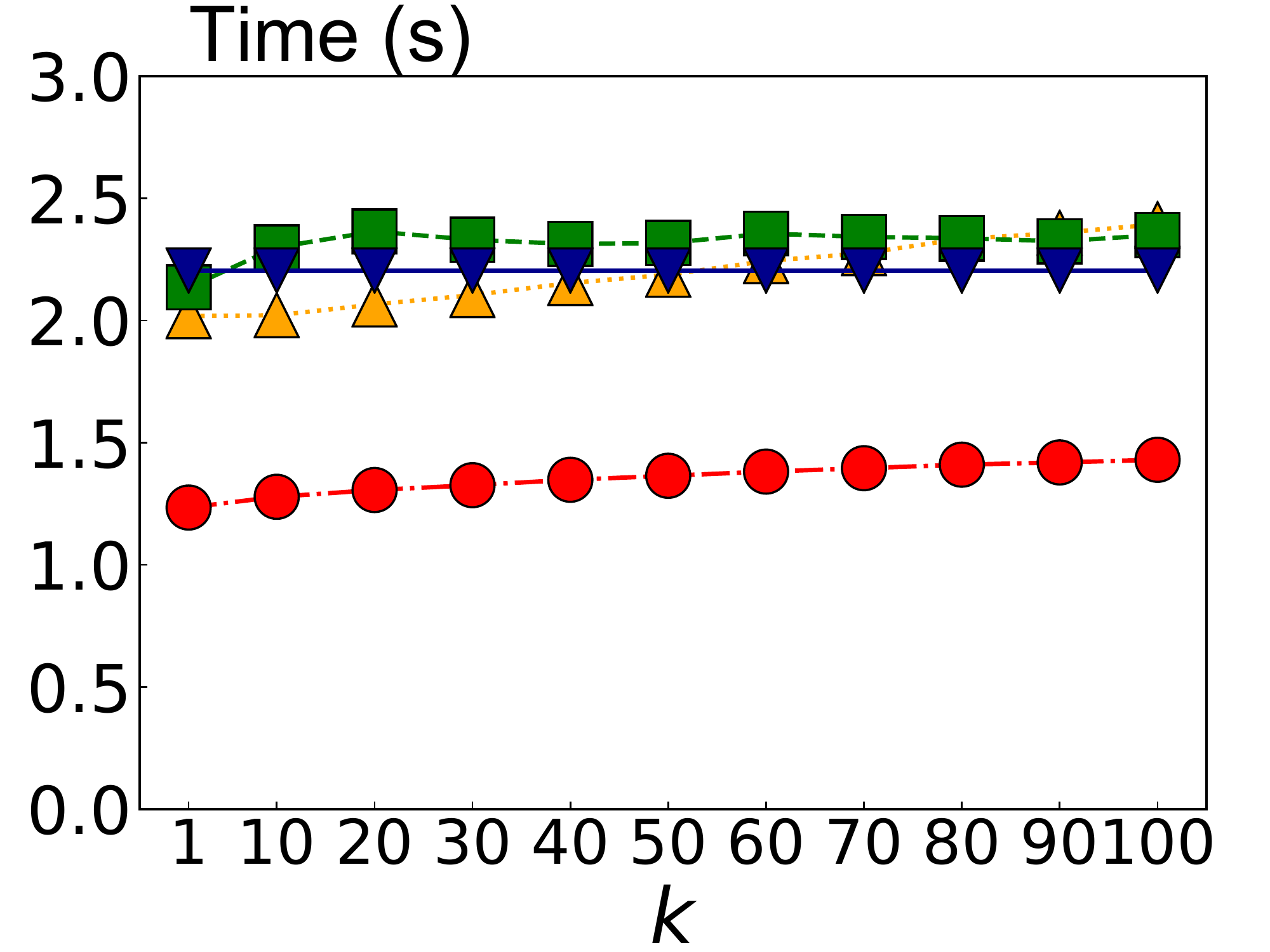}
		\end{minipage}
	}
	\subfigure[Xi'an with Hausdorff]{
		\begin{minipage}[c]{0.45\linewidth}
			\centering
			\includegraphics[width=1\textwidth]{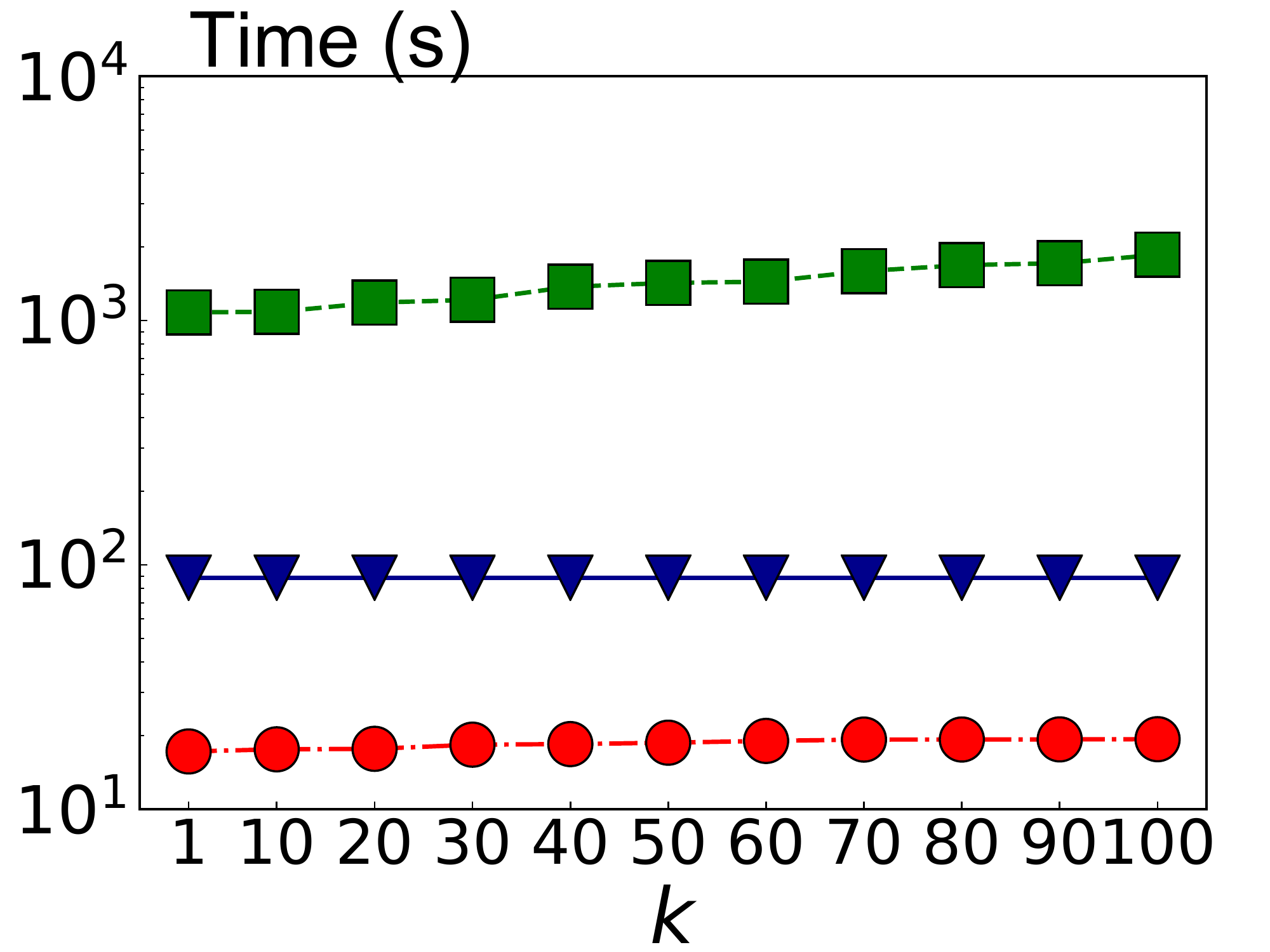}
		\end{minipage}
	}
	\subfigure[Xi'an with Frechet]{
		\begin{minipage}[c]{0.45\linewidth}
			\centering
			\includegraphics[width=1\textwidth]{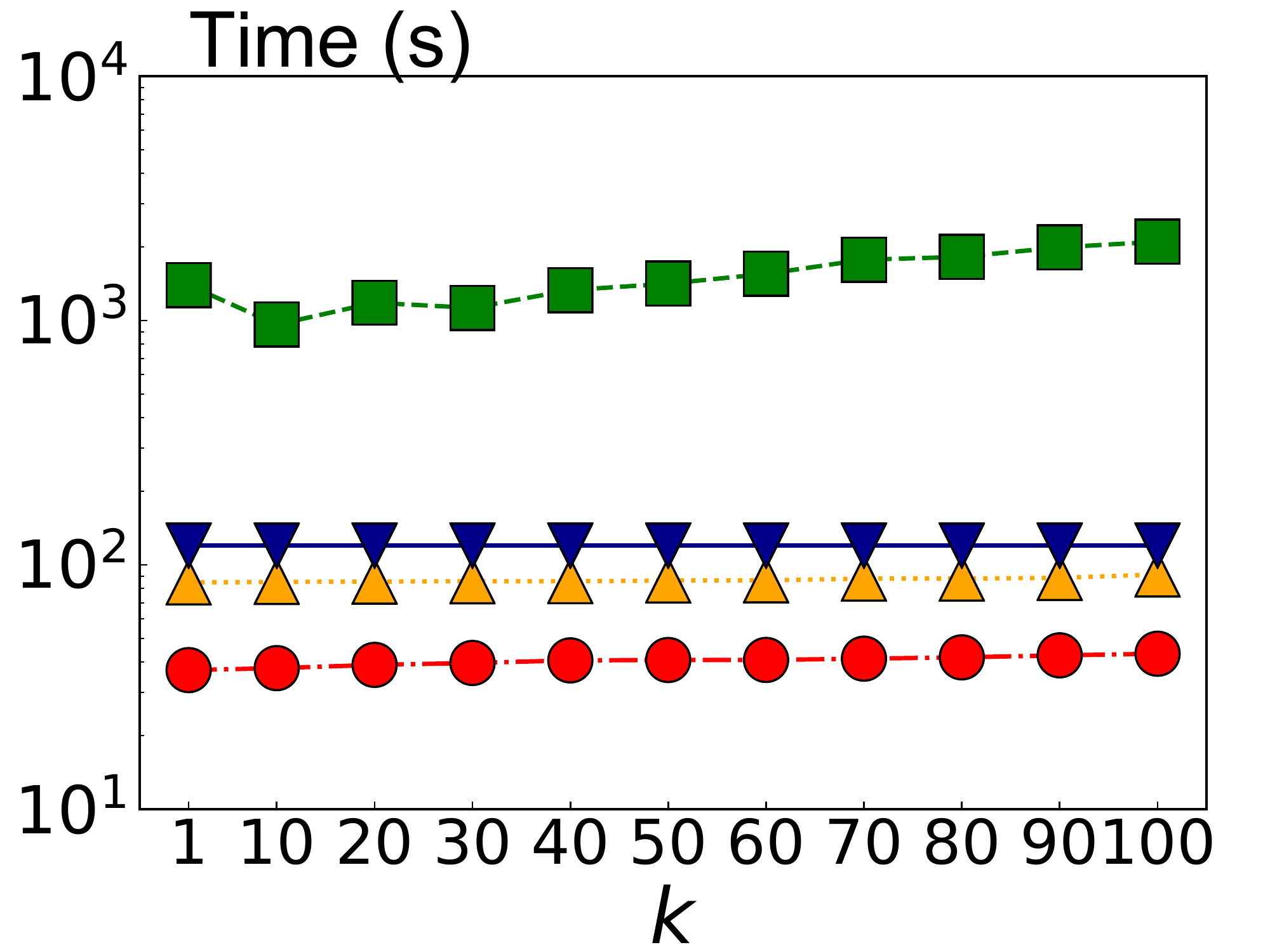}
		\end{minipage}
	}
	\subfigure[Osm with Hausdorff]{
		\begin{minipage}[c]{0.45\linewidth}
			\centering
			\includegraphics[width=1\textwidth]{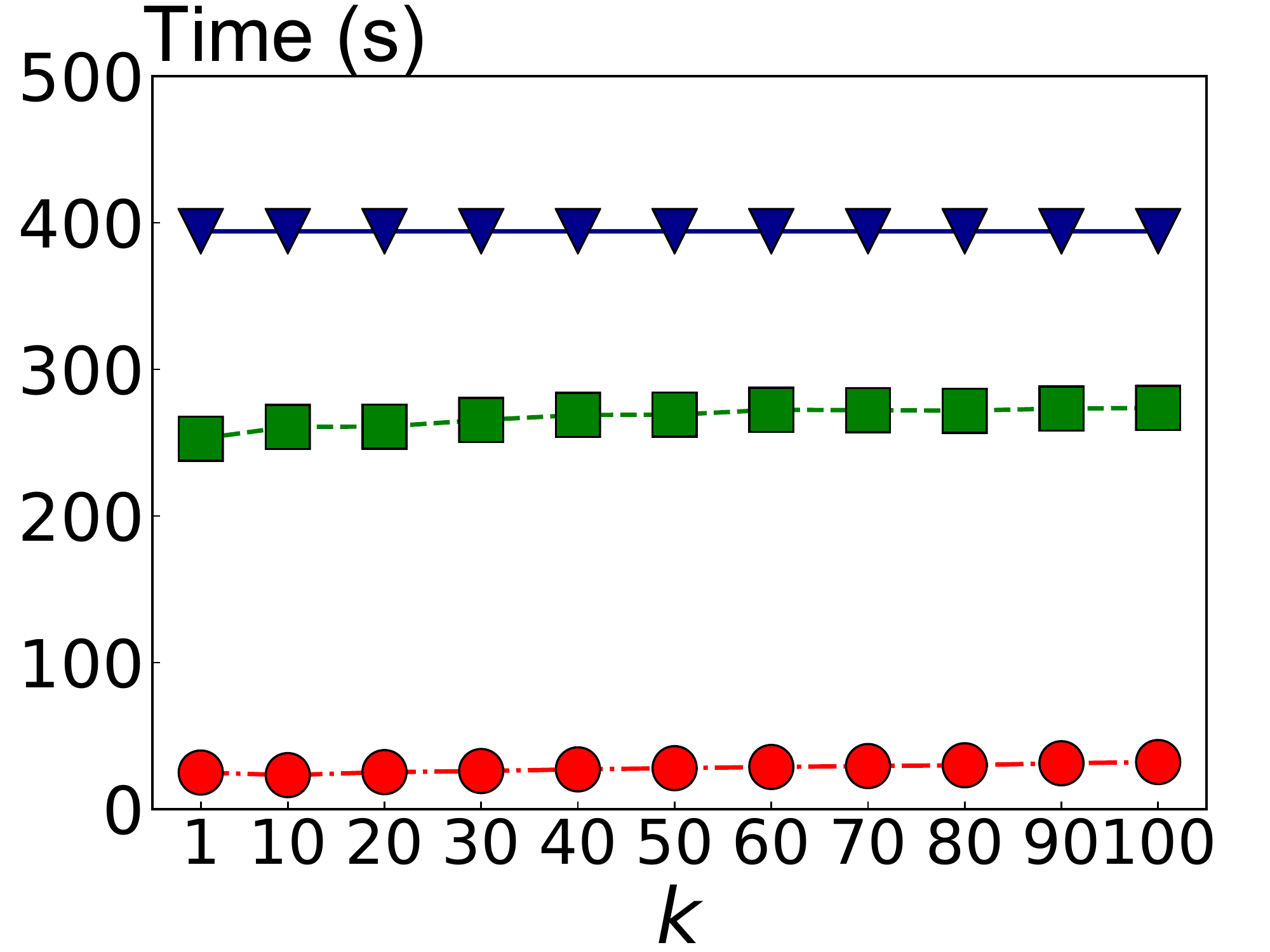}
		\end{minipage}
	}
	\subfigure[Osm with Frechet]{
		\begin{minipage}[c]{0.45\linewidth}
			\centering
			\includegraphics[width=1\textwidth]{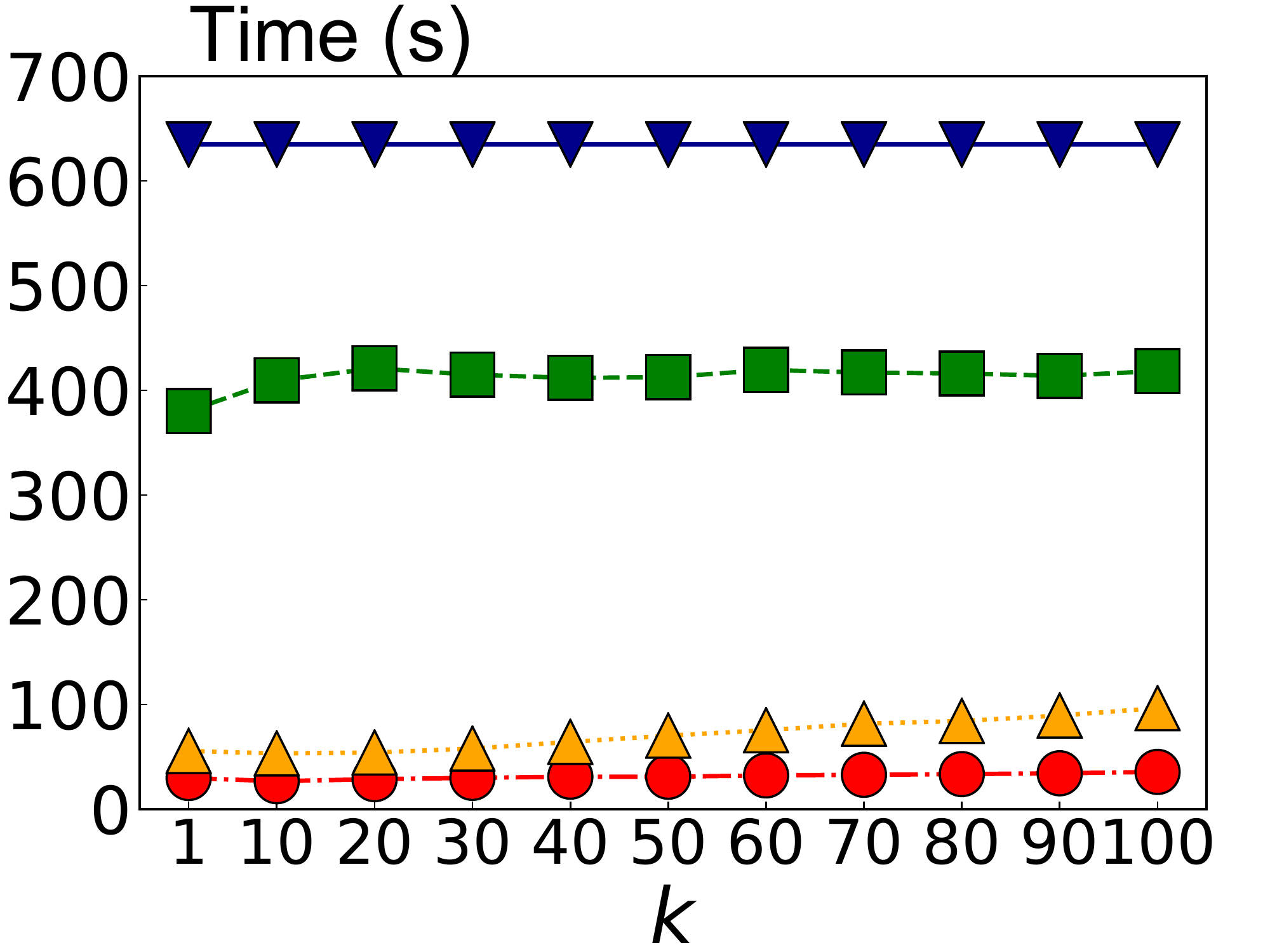}
		\end{minipage}
		\label{fig:k_osm_frechet}
	}
	\caption{Performance when Varying $k$}
	\label{knnfig}
\end{figure}

\textbf{Varying the value of $k$}. 
Fig. \ref{knnfig} shows the performances of all algorithms when varying the value of $ k $ in $ \{1,10,20,\dots,100\} $. We notice that with the increase of $k$, the changing trends differ across the algorithms. For example, the query time of DFT is unstable since it finds $ C * k $ trajectories at random from the dataset and uses the $ k $-th smallest distance as the threshold. The query time depends on the quality of the threshold.
The query time of DITA increases when $ k $ increases. DITA repeatedly reduces its threshold by half until the number of candidate trajectories is less than $ C * k $. The $ k $-th smallest distance is used to perform a range search.
LS computes the distance between the query trajectory and all trajectories, so the value of $k$ has little effect.
\textsf{REPOSE} uses the current best $k$-th result as a pruning threshold, so the query time also increases when $ k $ increases.
\textsf{REPOSE} achieves the best performance for all $ k $ and offers a considerable improvement. 

\begin{table}[t]
	\centering
	\caption{Query time of $\delta$ on $ D_H $ (Hausdorff) and $ D_F $ (Frechet)}
	\label{tb:parameter_expe_delta}
	\resizebox{0.43\textwidth}{!}{
	\begin{tabular}{|m{0.64cm}<{\centering}|m{0.5cm}<{\centering}|m{0.5cm}<{\centering}|m{0.64cm}<{\centering}|m{0.5cm}<{\centering}|m{0.5cm}<{\centering}|m{0.64cm}<{\centering}|m{0.5cm}<{\centering}|m{0.5cm}<{\centering}|}
		\hline
		\multicolumn{3}{|c|}{T-drive} & \multicolumn{3}{c|}{Xi'an} & \multicolumn{3}{c|}{OSM}\\
		\hline
		\multirow{2}*{Value}  & \multicolumn{2}{c|}{QT (s)}  &\multirow{2}*{Value}& \multicolumn{2}{c|}{QT (s)} & \multirow{2}*{Value} & \multicolumn{2}{c|}{QT (s)} \\
		\cline{2-3}
		\cline{5-6}
		\cline{8-9}
		& $ D_H  $ &  $ D_F$ &  &   $ D_H $ &  $  D_F $ &  &  $ D_H $ &  $  D_F $ \\
		\hline
		0.01 &5.04 &4.39 &0.005 &28.52 &68.60 &0.1&50.58&50.99 \\	
		0.05&4.06 &3.19 &\textbf{0.010} &\textbf{19.14} &61.31 &0.5&36.70 &39.52\\
		0.10&2.47 &2.55 &0.015 &21.88 &54.11 &\textbf{1.0}&\textbf{32.16} &\textbf{35.61}\\
		\textbf{0.15}&\textbf{1.29} &\textbf{1.28} &0.020 &23.98 &48.46 &1.5&35.66 &37.71\\
		0.20&2.26 &1.41 &0.025 &29.81 &51.76 &2.0&37.10 &38.46\\
		0.25&2.50 &1.50 &\textbf{0.030} &32.80 &\textbf{43.27} &2.5&39.23 &41.55\\
		0.30&3.04 &2.60 &0.035 &35.10 &56.34 &3.0&43.39 &45.54\\
		\hline
	\end{tabular}}
\end{table}

\textbf{Parameter chosen on $ \delta $}. 
We study the effect of $ \delta $ on \textsf{REPOSE}. With the change of $ \delta $, there are descending and ascending trends on the query time. The reason for the descending trend is that reference trajectories become longer quickly when $ \delta $ decreases. Although a long reference trajectory can improve the pruning, the benefits may become small after exceeding a certain length, since  time overheads for computing $LB_o$ and $LB_t$ are introduced.
The ascending trend occurs because the number of cells decreases when $ \delta $ increases. Then a reference trajectory cannot approximate a trajectory well, which results in a poor pruning and larger query times.
In summary, the value of $ \delta $ affects the query time significantly,  and it is important to select an appropriate $ \delta $ value. Table \ref {tb:parameter_expe_delta} shows that the effect of $ \delta $ is relatively consistent on the same dataset. Therefore, we can reuse the setting of $ \delta $ from one measure to another when the query efficiency meets the application requirements.

\begin{table}[t]
	\centering
	\caption{Query time of $N_p$ on $ D_H $ (Hausdorff) and $ D_F $ (Frechet)}
	\label{tb:parameter_expe_pn}
	\resizebox{0.43\textwidth}{!}{
	\begin{tabular}{|m{0.64cm}<{\centering}|m{0.5cm}<{\centering}|m{0.5cm}<{\centering}|m{0.64cm}<{\centering}|m{0.5cm}<{\centering}|m{0.5cm}<{\centering}|m{0.64cm}<{\centering}|m{0.5cm}<{\centering}|m{0.5cm}<{\centering}|}
		\hline
		\multicolumn{3}{|c|}{T-drive} & \multicolumn{3}{c|}{Xi'an} & \multicolumn{3}{c|}{OSM}\\
		\hline
		\multirow{2}*{Value}  & \multicolumn{2}{c|}{QT (s)}  &\multirow{2}*{Value}& \multicolumn{2}{c|}{QT (s)} & \multirow{2}*{Value} & \multicolumn{2}{c|}{QT (s)} \\
		\cline{2-3}
		\cline{5-6}
		\cline{8-9}
		& $ D_H  $ &  $ D_F$ &  &   $ D_H $ &  $  D_F $ &  &  $ D_H $ &  $  D_F $ \\
		\hline
		1 &1.45 &1.49 &1 &22.20 &49.14 &1&35.09&39.05 	\\	
		\textbf{3} &\textbf{1.21} &1.34 &3&21.10 &46.89 &3&34.36&37.85		\\
		\textbf{5} &1.29 &\textbf{1.28} &\textbf{5} &\textbf{19.14} &43.27 &\textbf{5}&\textbf{32.16}&\textbf{35.61}       \\
		7 &1.30 &1.51 &\textbf{7} &22.99 &\textbf{41.41} &7&32.33&35.85      \\
		9 &1.40 &2.02 &9 &23.38 &48.20 &9&39.43&36.14       \\
		11 &1.55 &2.43 &11 &24.07 &49.91 &11&38.86& 40.02     \\
		\hline
	\end{tabular}}
\end{table}

\textbf{Parameter chosen on $ N_p $}. 
Table \ref {tb:parameter_expe_pn} shows the effect of $ N_p $ on the algorithms. Similar to $ \delta $, there are descending and ascending trends on query time.  As $ N_p $ increases beyond a boundary value, the improvement on pruning is small and additional computational overhead is introduced.

\begin{figure}[t]
	\centering
	\includegraphics[width=.4\textwidth]{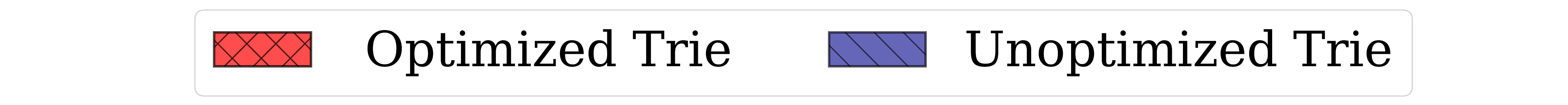} 
	\subfigure[Reduced \# of trie nodes]{
		\begin{minipage}[c]{0.45\linewidth}
			\centering
			\includegraphics[width=1\textwidth]{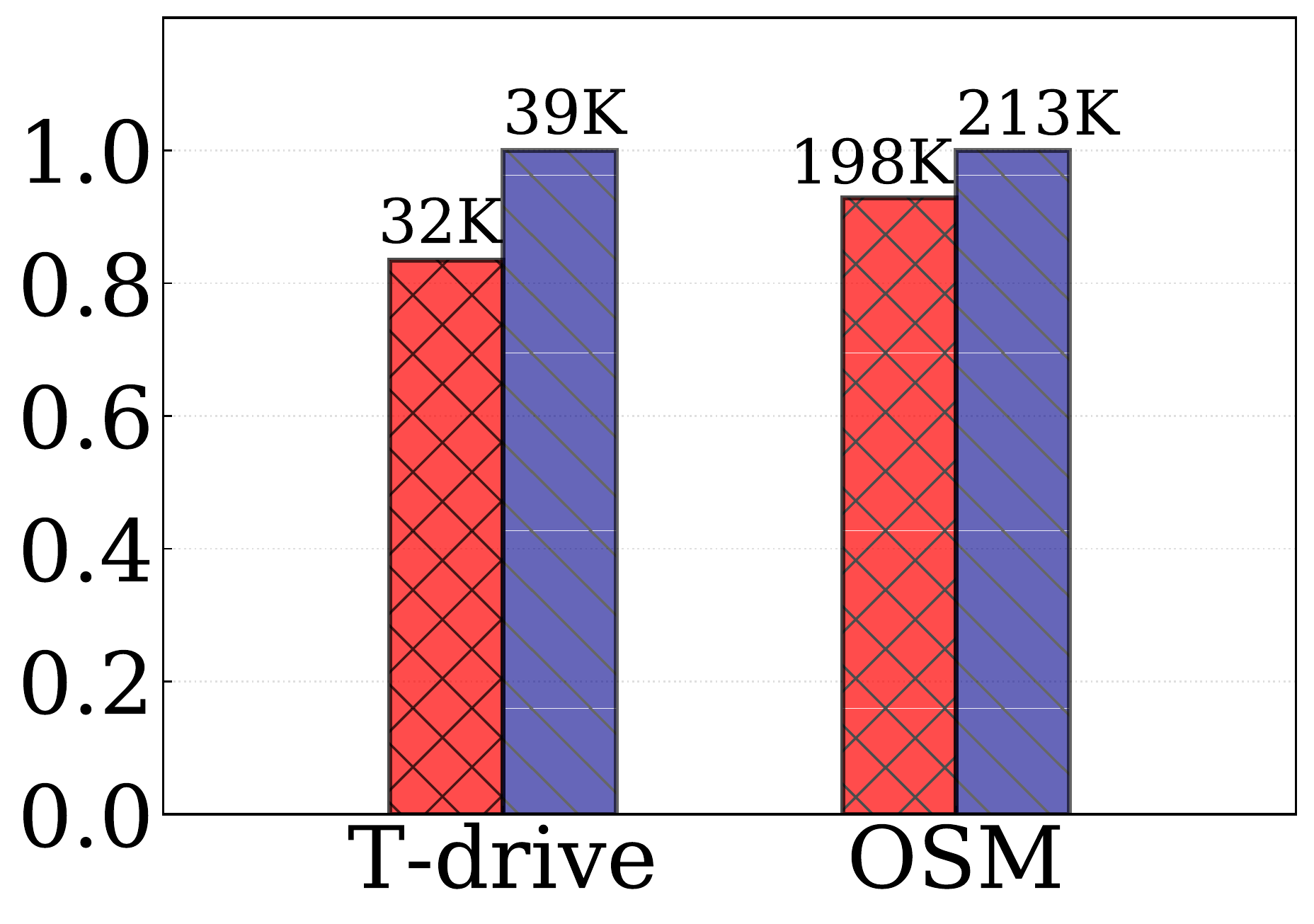}
		\end{minipage}
	}
	\subfigure[Improvment on query time]{
		\begin{minipage}[c]{0.45\linewidth}
			\centering
			\includegraphics[width=1\textwidth]{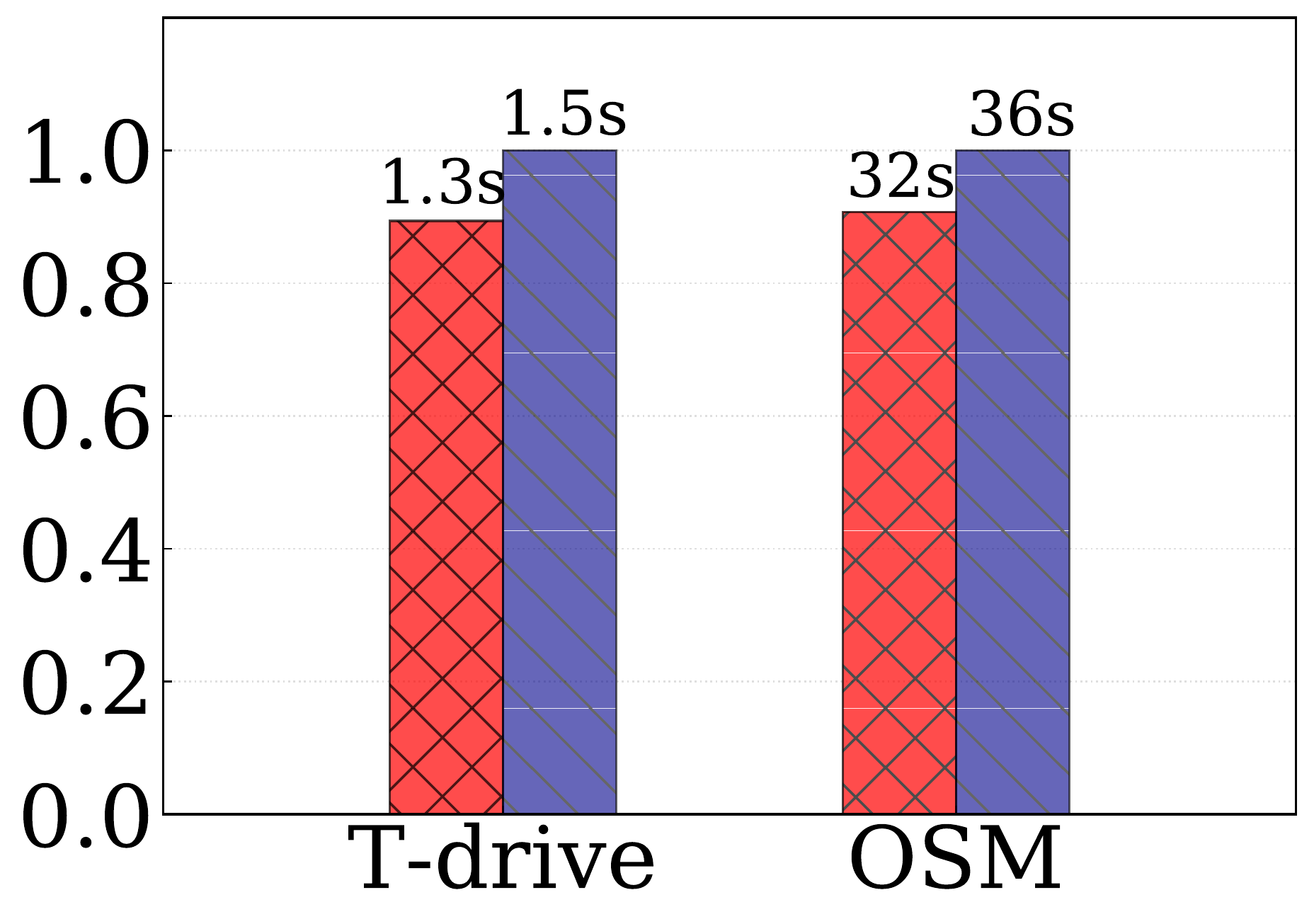}
		\end{minipage}
	}
	\caption{Improvement by Optimized Trie}
	\label{triefig}
\end{figure}

\begin{figure}[t]
	\centering
	\includegraphics[width=.4\textwidth]{fig/knn/legend.pdf} 
	\subfigure[Osm with Hausdorff]{
		\begin{minipage}[c]{0.45\linewidth}
			\centering
			\includegraphics[width=1\textwidth]{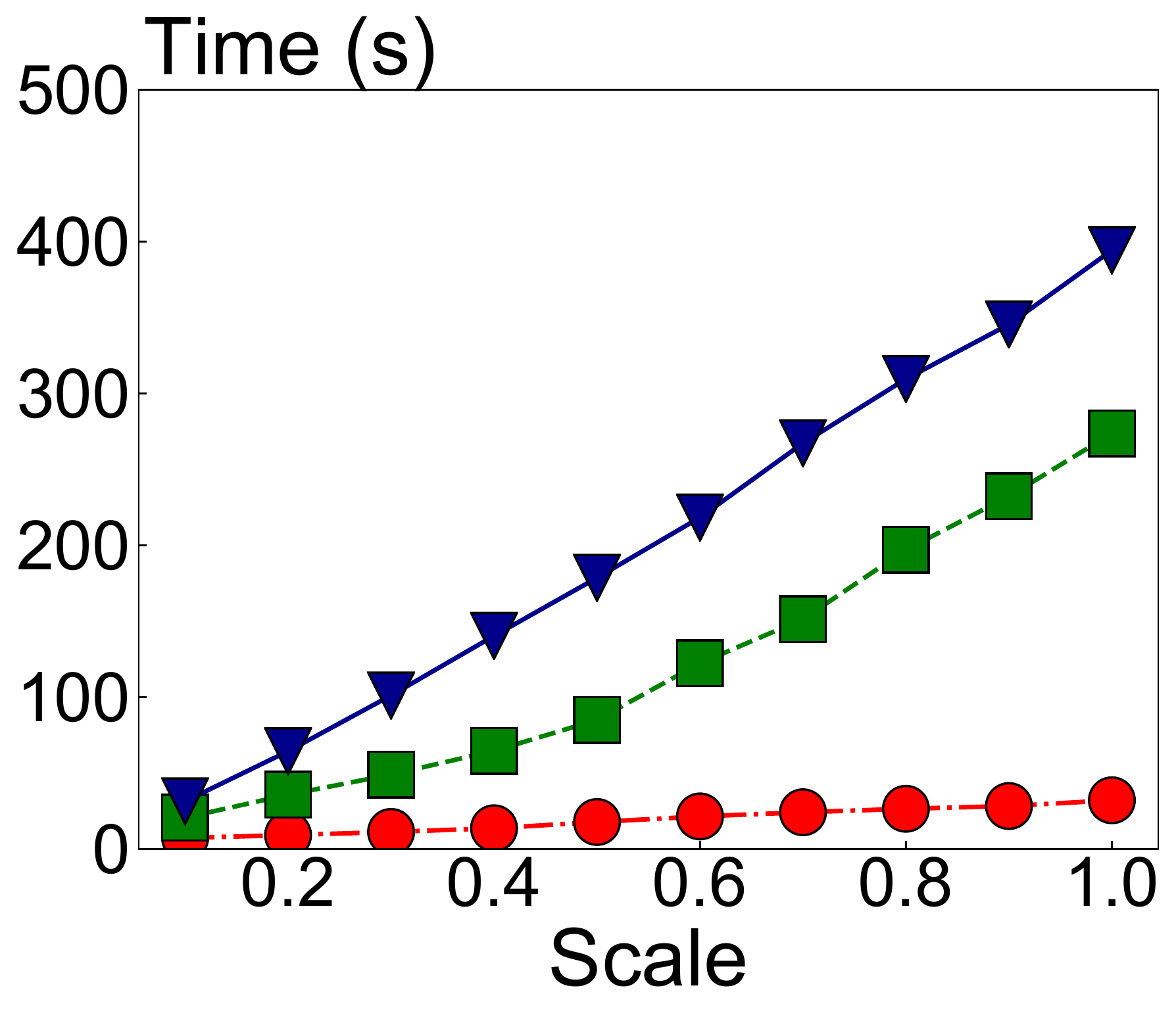}
		\end{minipage}
	}	
	\subfigure[Osm with Frechet]{
		\begin{minipage}[c]{0.45\linewidth}
			\centering
			\includegraphics[width=1\textwidth]{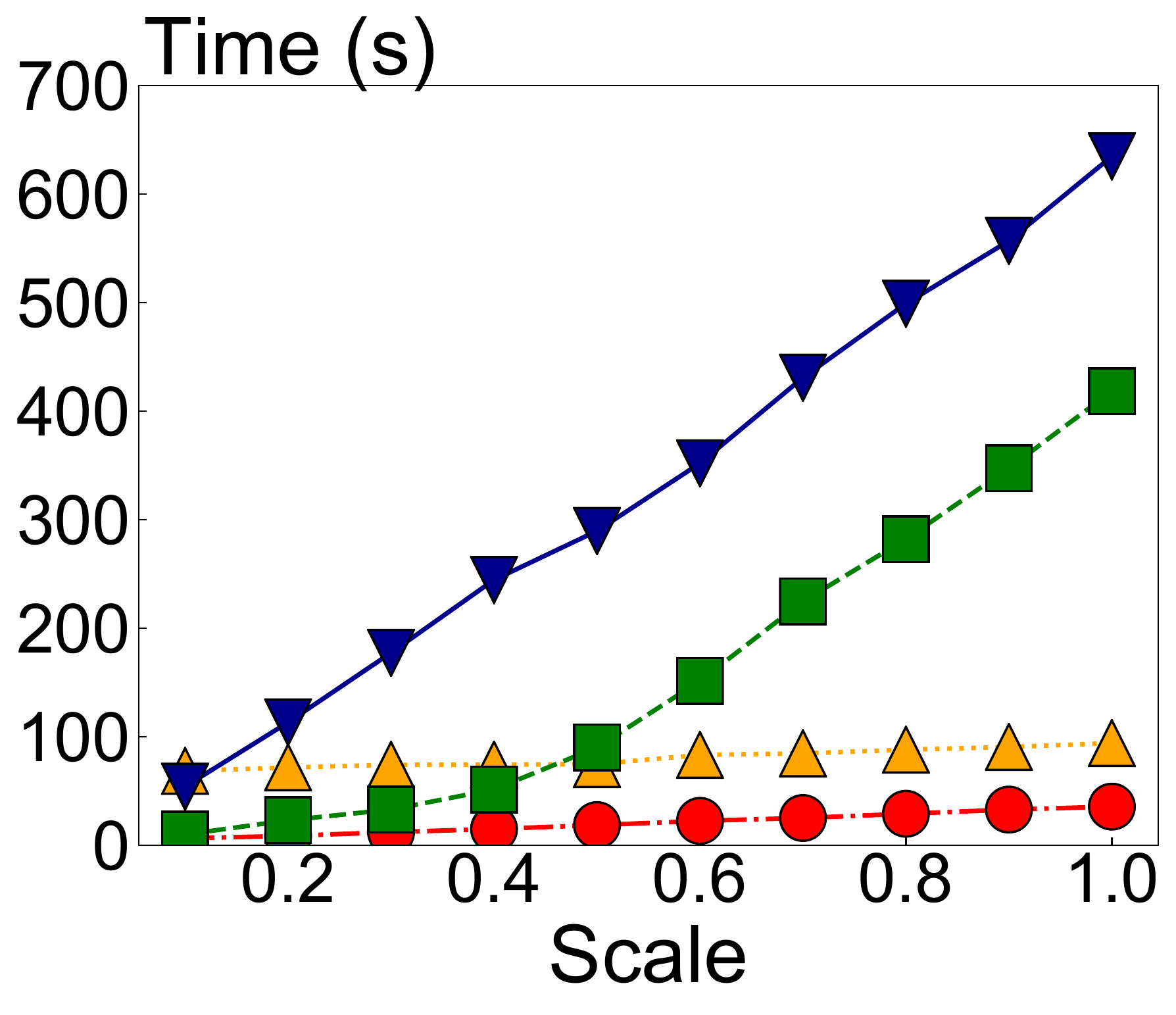}
		\end{minipage}
	}
	\caption{Effect of Dataset Cardinality}
	\label{fig:sizetime}
\end{figure}

\begin{figure}[t]
	\centering
	\subfigure[Osm with Hausdorff]{
		\begin{minipage}[c]{0.45\linewidth}
			\centering
			\includegraphics[width=1\textwidth]{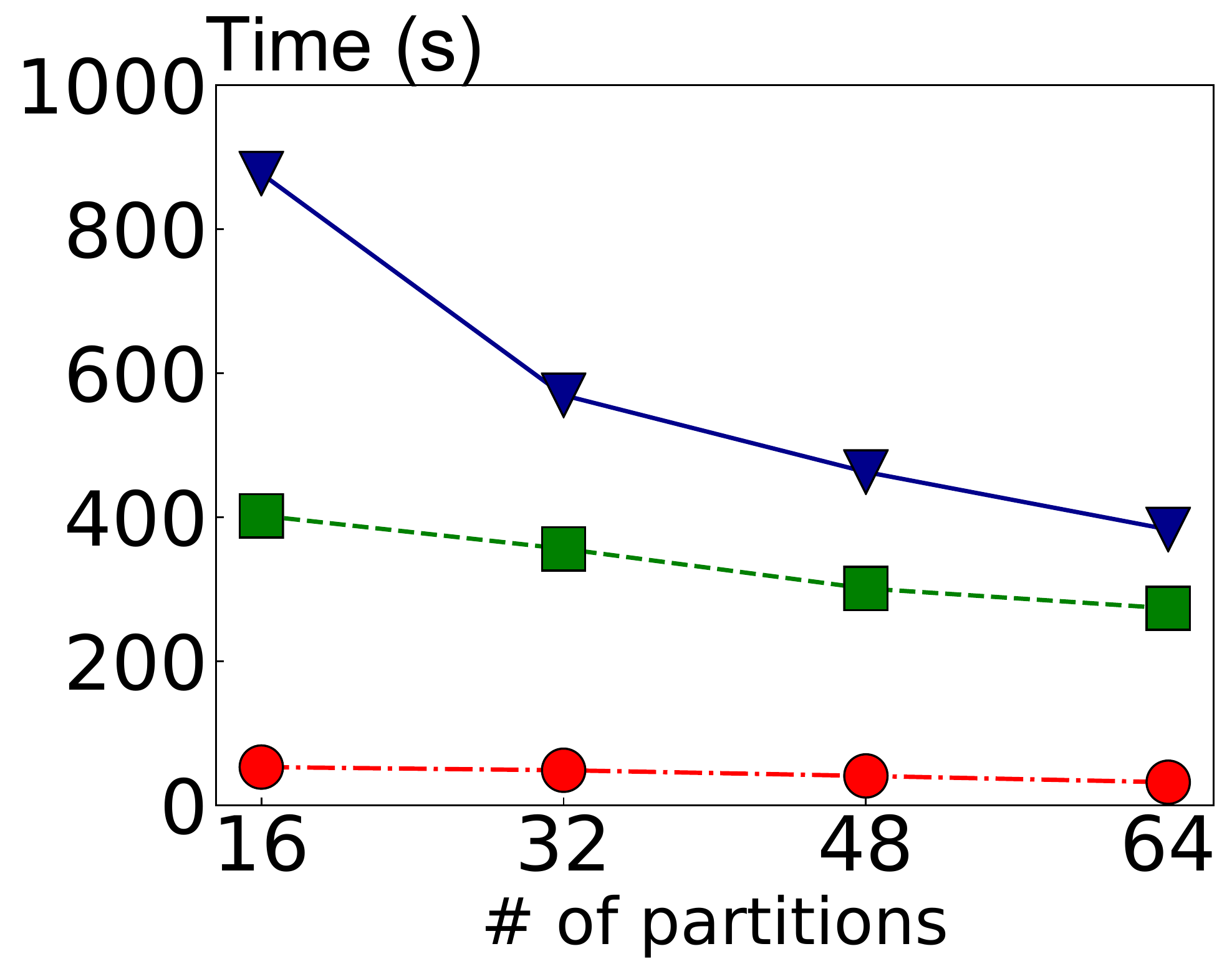}
		\end{minipage}
	}	
	\subfigure[Osm with Frechet]{
		\begin{minipage}[c]{0.45\linewidth}
			\centering
			\includegraphics[width=1\textwidth]{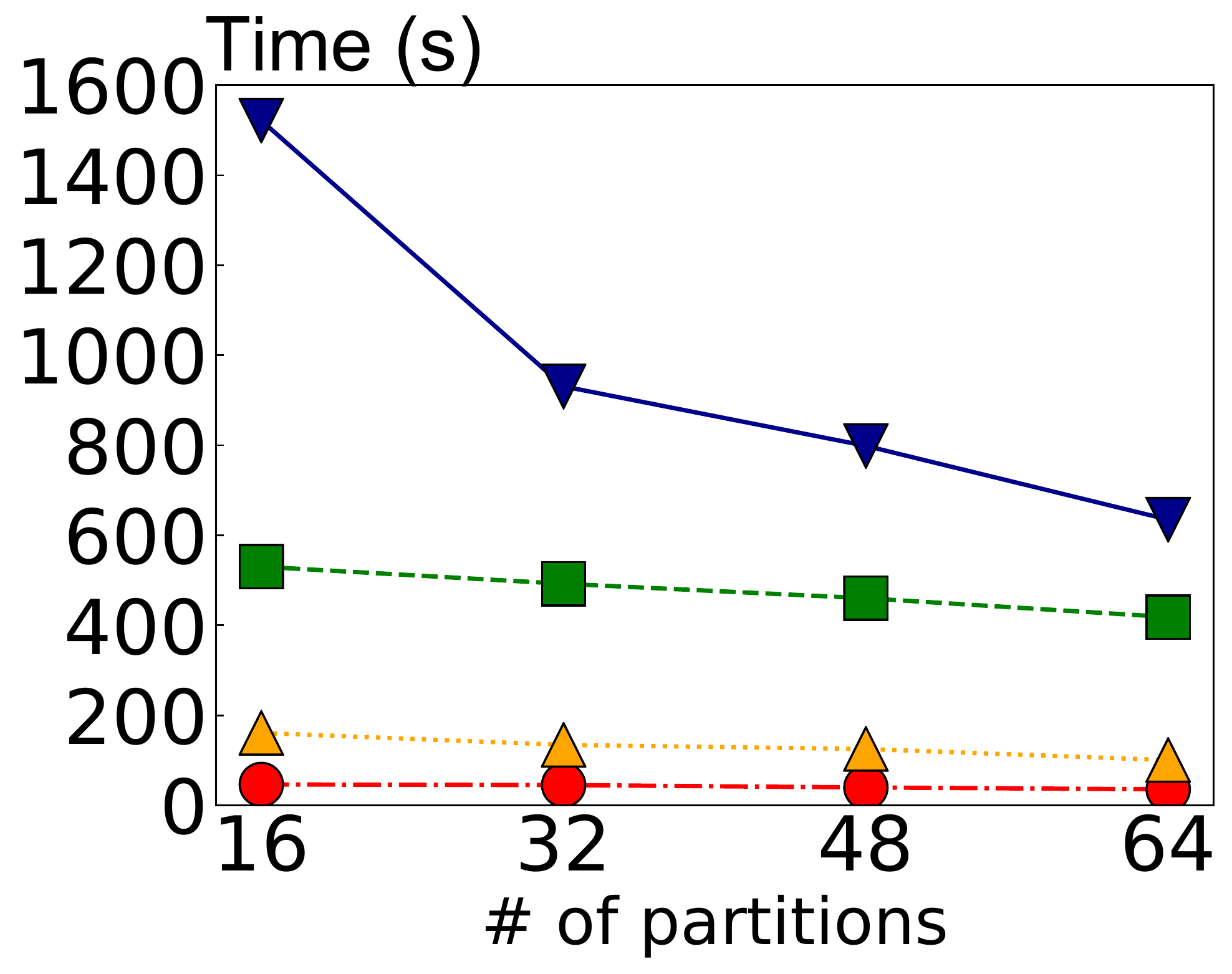}
		\end{minipage}
	}
	
	\caption{Effect of the Number of Partitions}
	\label{fig:coretime}
\end{figure}

\textbf{Improvement by optimized trie.}
Fig. \ref{triefig} shows the improvements by using the optimized trie on the query time and the reduced number of trie nodes. For T-drive dataset, the number of trie nodes is reduced by about 20\% compared to that of using unoptimized trie. The query time is decreased by about 12\%. For OSM dataset, both of the number of trie nodes and the query time are reduced by about 8\%.

\textbf{Effect of dataset cardinality.}
From Fig. \ref{fig:sizetime}, we observe that the query time of all algorithms grows linearly. \textsf{REPOSE} has the best performance due to a better partitioning scheme and a powerful pruning.

\begin{table}[t]
	\centering
	\caption{Effect of Partitioning Strategy}
	\label{tb:partition_strategy}
	\scriptsize
	\begin{tabular}{|m{0.95cm}<{\centering}|m{1.55cm}<{\centering}|m{1.2cm}<{\centering}|m{0.95cm}<{\centering}|m{0.9cm}<{\centering}|}
		\hline
		\textbf{Distance}&\textbf{Partitioning}&\textbf{T-drive (s)} &\textbf{Xi'an (s)}&\textbf{OSM (s)} \\
		\hline
		\hline
		\multirow{3}*{Hausdorff}&Heterogeneous&\textbf{1.29} &\textbf{19.14} &\textbf{32.16}\\			
		\cline{2-5}	
		& Homogeneous &1.64 &35.93 &344.74\\
		\cline{2-5}
		& Random &1.51 &21.38 &35.12 \\
		\cline{1-5}	
		\multirow{3}*{Frechet}&Heterogeneous&\textbf{1.28} &\textbf{43.27} &\textbf{35.61}\\			
		\cline{2-5}	
		&Homogeneous&1.52 &109.42 &240.32\\
		\cline{2-5}
		&Random&1.45 &47.94 &37.68 \\
		\cline{1-5}		
		\hline	
	\end{tabular}
\end{table}

\textbf{Effect of the number of partitions.}
Fig. \ref{fig:coretime} shows that 
with the increase of the number of partitions from 16 to 64, all algorithms gain performance improvements. 
\cmt{We have 64 partitions by default, where each core processes a partition.}
\textsf{REPOSE} achieves a higher performance gain (not including LS) because it implements a better partitioning scheme by equalizing the workload of each partition. LS has the highest performance gain. This is because it suffers from a severe data skew issue when the number of partitions is small. When we increase the number of partitions, the performance is significantly improved.

%

\cmt{\textbf{Effect of partitioning strategies.} 
We deploy three different partitioning strategies, heterogeneous, homogeneous, and random, with the RP-Trie as a local index to analyze the effects of the partitioning strategies.
In the homogeneous strategy, we place similar trajectories in a cluster in the same partition.
In the random strategy, we place trajectories into partitions at random.
The results are shown in Table \ref{tb:partition_strategy}.
The heterogeneous partitioning strategy achieves the best performance, and the homogeneous strategy has the worst performance, which can be explained as follows:
(1) The local pruning of the homogeneous strategy is less effective since the distances of the trajectories in a partition to the query trajectory are very similar.
(2) In the heterogeneous strategy, a high degree of data discrimination in a partition is able to improve the local pruning. In addition, since the composition of each partition is similar, the query times of all partitions are similar, which enables load-balancing.}

\renewcommand\arraystretch{1.3}
\begin{table}[t]
	\centering
	\caption{Comparison with DITA using Heterogeneous Partitioning}
	\label{tb:heter_dita}
	\scriptsize
	\begin{tabular}{|m{0.85cm}<{\centering}|m{1.5cm}<{\centering}|m{1.2cm}<{\centering}|m{0.95cm}<{\centering}|m{0.9cm}<{\centering}|}
		\hline
		\textbf{Distance}&\textbf{Algorithm}&\textbf{T-drive (s)} &\textbf{Xi'an (s)}&\textbf{OSM (s)} \\
		\hline
		\hline
		\multirow{3}*{DTW}	&REPOSE&\textbf{1.48} &\textbf{180.09} &\textbf{499.52} \\
		\cline{2-5}	
		&Heter-DITA&2.74 &184.97 &503.59\\
		\cline{2-5}
		&DITA&2.78 &186.36 &509.53\\			
		\cline{1-5}	
		\multirow{3}*{Frechet}&REPOSE&\textbf{1.28} &\textbf{43.27} &\textbf{35.61} \\
		\cline{2-5}
		&Heter-DITA&2.26 &84.94 &87.56\\
		\cline{2-5}
		
		&DITA&2.39 &91.31 &94.12\\
		\cline{1-5}		
		\hline	
	\end{tabular}
\end{table}

\begin{table}[t]
	\centering
	\caption{Comparison with DFT using Heterogeneous Partitioning}
	\label{tb:heter_dft}
	\scriptsize
	\begin{tabular}{|m{1cm}<{\centering}|m{1.5cm}<{\centering}|m{1.2cm}<{\centering}|m{0.95cm}<{\centering}|m{0.9cm}<{\centering}|}
		\hline
		\textbf{Distance}&\textbf{Algorithm}&\textbf{T-drive (s)} &\textbf{Xi'an (s)}&\textbf{OSM (s)} \\
		\hline
		\hline
		\multirow{3}*{Hausdorff}&REPOSE&\textbf{1.29} &\textbf{19.14} &\textbf{32.16} \\	
		\cline{2-5}	
		&Heter-DFT&2.32 &1645.59 &260.77\\
		\cline{2-5}
		&DFT&2.50 &1857.88&273.68\\	
		\cline{1-5}	
		\multirow{3}*{Frechet}&REPOSE&\textbf{1.28} &\textbf{43.27} &\textbf{35.61} \\			
		\cline{2-5}	
		&Heter-DFT&2.21 &1855.01 &380.47\\
		\cline{2-5}
		&DFT&2.33 &2100.80 &418.40\\	
		\cline{1-5}		
		\hline	
	\end{tabular}
\end{table}

\cmt{\textbf{Comparison with DITA and DFT using heterogeneous partitioning.}
To further explore the benefits of the heterogeneous partitioning strategy, we apply it to DITA and DFT to examine the effects on performance. We denote these two methods as Heter-DITA and Heter-DFT. The results are shown in Tables \ref{tb:heter_dita} and \ref{tb:heter_dft}. Since DITA does not support Hausdorff, we examine its performance on DTW instead. The results show that both Heter-DITA and Heter-DFT outperform the original DITA and DFT, which offers evidence that the heterogeneous partitioning strategy is superior.}

\section{Related Work}\label{sec:related}

Trajectory similarity search is a fundamental operation in many applications \cite{DBLP:conf/time/DingTS08,DBLP:conf/icde/FrentzosGT07,DBLP:conf/gis/BakalovHT05,DBLP:conf/mdm/BakalovHKT05,DBLP:conf/icde/VlachosGK02,DBLP:journals/pvldb/XieLP17,DBLP:conf/compgeom/DriemelS17,DBLP:conf/vldb/Keogh02,DBLP:conf/sigmod/FaloutsosRM94,DBLP:conf/kdd/VlachosHGK03}. Based on the index structures, we classify the existing proposals into three categories:

\textbf{(1) Space partitioning tree based approach}. Traditional space partitioning trees \cite{DBLP:conf/vldb/Keogh02,DBLP:conf/sigmod/FaloutsosRM94,DBLP:conf/kdd/VlachosHGK03,DBLP:conf/kdd/WangZX14} (e.g., R-tree) construct MBRs by dividing trajectories into sub-trajectories and prune non-candidate trajectories based on the MBR intersections. However, large MBR areas limit the pruning capabilities. For metric spaces, classical metric indexes, such as the VP-tree and the M-tree \cite{DBLP:journals/vldb/FuCCM00,DBLP:conf/soda/Yianilos93,DBLP:journals/ipl/Uhlmann91}, are used for Hausdorff and Frechet. However, these methods perform poorly in practice since they abandon the features of the trajectory and use only the metric characteristics to index the trajectories.

\textbf{(2) Grid hashing based approach}. The grid hashing based approaches \cite{DBLP:conf/compgeom/DriemelS17,DBLP:conf/gis/AstefanoaeiCKGS18} use the grid sequence that the trajectories traverse to represent the trajectories, which captures the trajectory characteristics and makes it possible to quickly find candidate trajectories. Although these methods fail to return exact results, they demonstrate good performances in approximate trajectory similarity search.

\textbf{(3) Distributed index based approach}.
\cmt{SmartTrace$^+$ \cite{DBLP:journals/tkde/Zeinalipour-YaztiLCVAG13} is a real-time distributed trajectory similarity search framework for smartphones that employs a similar pruning and refinement paradigm of \textsf{REPOSE}.}
DFT \cite{DBLP:journals/pvldb/XieLP17} uses the R-tree to index trajectory line segments and prunes by using the distances between a query trajectory and the MBRs of line segments. Unfortunately, DFT needs to reconstruct the line segments into trajectories when computing similarity, which incurs high space overhead.
DITA \cite{DBLP:conf/sigmod/Shang0B18} selects pivot points to represent a trajectory and stores close pivot points in the same MBR. Then, DITA utilizes the trie to index all MBRs and prunes by using the distances between the query trajectory and the MBRs. However, DITA fails to retain the features of original trajectories due to its pivot point selection strategy. Moreover, the query processing is inefficient for some distance metrics.
\section{Conclusion and Future Work}\label{sec:conclusion}
We present a distributed in-memory framework, \textsf{REPOSE}, for processing top-$k$ trajectory similarity queries on Spark.
We develop a heterogeneous global partitioning strategy that places similar trajectories in different partitions in order to achieve load balancing and to utilize all compute nodes during query processing. We design an effective index structure, RP-Trie, for local search that utilizes two optimizations, namely a succinct trie structure and z-value re-arrangement. To further accelerate the query processing, we provide several pruning techniques. The experimental study uses 7 widely used datasets  and offers evidence that \textsf{REPOSE} outperforms its competitors in terms of query time, index size and index construction time. Compared to the state-of-the-art proposals for Hausdorff, \textsf{REPOSE} improves the query time by about 60$\%$, the index size by about 80$\%$, and the index construction time by about 85$\%$.

\cmt{In the future work, it is of interest to take the temporal dimension into account to enable top-$k$ spatial-temporal trajectory similarity search in distributed settings. In addition, it is of interest to develop optimizations for other distance measures, such as LCSS, ERP, and EDR, when using the RP-Trie index.}

\bibliographystyle{abbrv}
\bibliography{myRef}

\newpage

\appendix

\subsection{Pseudocode of Search Procedure}

The search procedure is detailed in Algorithm \ref{LocalSearch}. 

\begin{algorithm}
	\caption{\textsf{LocalSearch}($ \tau_q,root,k$)}  
	\label{LocalSearch} 
	\LinesNumbered
	\KwIn{The query trajectory $ \tau_q $, the $root$ node, and $k$}
	\KwOut{The top-$ k $ results}	
	$ d_k \gets +\infty$; $ minHeap \gets \emptyset$\;
	Initialize a priority queue $E$ with $root$\;
	\While{$ E \ne \emptyset $}{
		$ t\gets E.top $\;
		\If{$t$ is a leaf node}
		{
			\If{$t.LB_t< d_k \land t.LB_p< d_k$}{	
				\For{each $ \tau \in t $}
				{
					Compute $\textsf{D}_\textsf{H}$($\tau_q$,$\tau$)\;
					Update $ minHeap $ and $d_k$ if necessary\;
				}
			}
			\Else{Break\;}
		}
		\Else{
			\If{$t.LB_o< d_k \land t.LB_p< d_k$}{	
				\For{each child node $ t'$ of $t $} 
				{
					$p^*\gets$ the reference point of $t'$\; 
					$t' \gets$ \textsf{CompLB}($\tau_q,p^*,t.r,t.c_{max}$)\;
					Insert $t'$ into $E$\;	
				}
			}
			\Else{Break\;}
		}
	}
	\Return $k$ trajectories in $ minHeap $\;
\end{algorithm}

\subsection{Details of the Greedy Algorithm for the Optimized RP-Trie}

Assume that we have a set of $N$ reference trajectories $\mathcal{Z}=\{Z_1,\dots,Z_N\}$ and $M$ cells. Let $L$ be the maximum length of reference trajectories and $\mathcal{L}(\mathcal{Z})=\sum_{i=1}^{N} |Z_i|$ be the total number of z-values in $\mathcal{Z}$. 

First, for the root node, we count the frequency of each z-value in $\mathcal{Z}$ and store them in an array $C(\mathcal{Z})$ of size $M$.
The time of counting is $O(\mathcal{L}(\mathcal{Z}))$, and the space consumption of $C(\mathcal{Z})$ is $O(M)$.
Second, we find the most frequent z-value $z_1$. 
Let $\mathcal{Z}^{z_1}$ be the set of reference trajectories containing $z_1$. We build a node $e_1$ with label $z_1$ as a child node of the root, and put the trajectories in $\mathcal{Z}^{z_1}$ into the subtree of $e_1$.
Similarly, for $e_1$, we count the frequency of each z-value in $\mathcal{Z}^{z_1}$ and store them by an array $C(\mathcal{Z}^{z_1})$ with size $M$. 
Third, to find the next most frequent z-value $z_2$ in the remaining trajectories, we only need to compute $C(\mathcal{Z})-C(\mathcal{Z}^{z_1})$ rather than to count the frequency of each z-value in $\mathcal{Z}-\mathcal{Z}^{z_1}$. 
Thus, we find the most frequent z-value $z_2$ of $\mathcal{Z}-\mathcal{Z}^{z_1}$ according to $C(\mathcal{Z})-C(\mathcal{Z}^{z_1})$. Let $\mathcal{Z}^{z_2}$ be the set of reference trajectories containing $z_2$ in $\mathcal{Z}-\mathcal{Z}^{z_1}$. We build a node $e_2$ with label $z_2$ as a child node of the root and put the trajectories in $\mathcal{Z}^{z_2}$ into the subtree of $e_2$. 

We repeat this process when there is no trajectory left. Assume that we divide $\mathcal{Z}$ into $\mathcal{Z}^{z_1},\dots,\mathcal{Z}^{z_{t_1}}$, and we have $C(\mathcal{Z}^{z_1}),\dots,C(\mathcal{Z}^{z_{t_1}})$. That is, we have $t_1$ nodes in the first level of the RP-Trie. The time of finding all $z_i$ is $O(M{t_1})$.
The accumulated time of computing the arrays is bounded by $\sum_{i=1}^{t_1} O(\mathcal{L}(\mathcal{Z}^{z_i}))=O(\mathcal{L}(\mathcal{Z}))$. Therefore, the time cost of building the first level of the RP-Trie is $O(\mathcal{L}(\mathcal{Z})+M{t_1})$.

Similarly, if the $i$-th level has $t_i$ nodes, the time cost of building the $i$-th level is $O(\mathcal{L}(\mathcal{Z})+M{t_i})$. The optimized RP-Trie has at most $L+1$ layers and $\mathcal{L}(\mathcal{Z})+1$ nodes. Therefore, the overall time cost is $O(\mathcal{L}(\mathcal{Z})L+M\mathcal{L}(\mathcal{Z}))$. Normally, $L$ and $M$ are not large constants, $L\leq M$ and $ \mathcal{L}(\mathcal{Z})\leq NL$. So the time cost of building an optimized RP-Trie is $O(NM^2)$. In fact, most reference trajectories have lengths much less than $L$, and the number of the nodes is smaller than $\mathcal{L}(\mathcal{Z})$. So, the real time cost is much less than $O(NM^2)$.

\begin{table}[t]
	\centering
	\caption{Trajectories in $\mathcal{Z}$ to be Indexed}
	\label{tb:dataset_appendix}
	\resizebox{0.48\textwidth}{!}{
	\begin{tabular}{|m{0.4cm}<{\centering}|c|c||m{0.4cm}<{\centering}|c|c|}
		\hline
		\textbf{ID} & \multicolumn{2}{c||}{\textbf{Reference Trajectory}} & \textbf{ID} & \multicolumn{2}{c|}{\textbf{Reference Trajectory}}\\
		\hline
		$Z_1$ &$ \{0001,0011\} $ &$\{1,3\}$ &$Z_5$ &$\{0011,0101\}$ &$\{3,5\}$    \\
		\hline
		$Z_2$ &$ \{0001,0011,0101\} $ &$\{1,3,5\}$ &$Z_6$ &$\{0001,0100\}$ &$\{1,4\}$    \\
		\hline
		$Z_3$ &$ \{0010,0011\} $ &$\{2,3\}$ &$Z_7$ &$\{0010,0100\}$ &$\{2,4\}$    \\
		\hline
		$Z_4$ &$ \{0010,0011,0101\} $ &$\{2,3,5\}$ &$Z_8$ &$\{0101,0110\}$ &$\{5,6\}$    \\
		\hline
	\end{tabular}
}
\end{table}

\begin{figure}
	\centering
	\includegraphics[width=.9\linewidth]{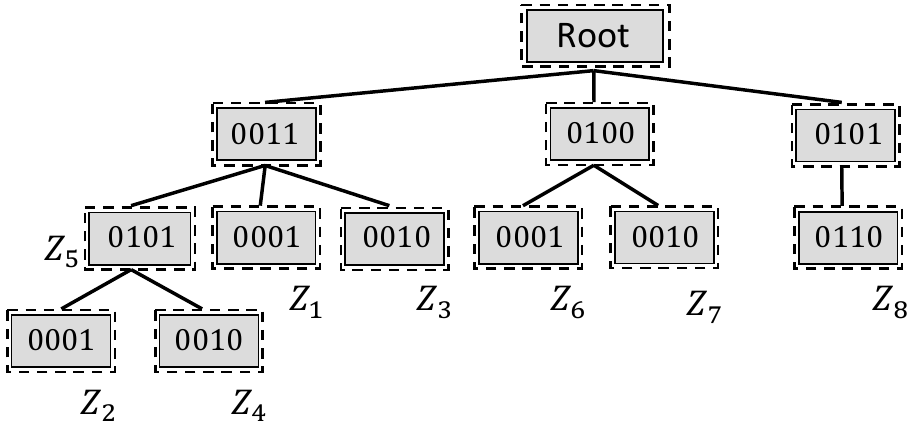}	
	\caption{Optimized RP-Trie for $\mathcal{Z}$}
	\label{fig:greedy_ex_trie}
\end{figure}

\begin{example}
	Consider a set of trajectories $\mathcal{Z}$ in Table \ref{tb:dataset_appendix}. There are $M=6$ cells, i.e., $\{0001,0010,0011,0100,0101,0110\}$. We have $C(\mathcal{Z})=\{3,3,5,2,4,1\}$. So the most frequent z-value is $0011$ with frequency 5. Let $z_1$ be $0011$ and $\mathcal{Z}^{z_1}=\{Z_1,Z_2,Z_3,Z_4,Z_5\}$. We build a node $e_1$ with label $0011$ and index these 5 trajectories in $e_1$. Then, we count the frequency of $\mathcal{Z}^{z_1}$ as $C(\mathcal{Z}^{z_1})=\{2,2,5,0,3,0\}$. To find $z_2$, we compute the frequency of $\mathcal{Z}-\mathcal{Z}^{z_1}$ as $C(\mathcal{Z})-C(\mathcal{Z}^{z_1})=\{1,1,0,2,1,1\}$. So the next most frequent z-value is $z_2=0100$ with frequency 2. 
	Similarly, $\mathcal{Z}^{z_2}=\{Z_6,Z_7\}$ and $C(\mathcal{Z}^{z_2})=\{1,1,0,2,0,0\}$. Now only one trajectory, $Z_8$, remains, and we arbitrarily select a z-value $0101$ in $Z_8$ as $z_3$. Thus, we finish building the first level of the RP-Trie.
	
	Then, we continue to build the subtrees of $e_1$, $e_2$ and $e_3$. As $C(\mathcal{Z}^{z_1})$, $C(\mathcal{Z}^{z_2})$ and $C(\mathcal{Z}^{z_3})$ have been obtained, the following computation is the same as that for $\mathcal{Z}$. Finally, we build an RP-Trie as shown in Fig. \ref{fig:greedy_ex_trie}.
\end{example}                                                           

\end{document}